\documentclass[runningheads]{llncs}
\usepackage[T1]{fontenc}
\usepackage{graphicx}
\def \VersionWithComments {}

\ifdefined\VersionWithComments
\usepackage[colorinlistoftodos,textsize=footnotesize]{todonotes}
\setuptodonotes{inline} 
\else
\usepackage[disable]{todonotes}
\fi

\usepackage{amsmath}
\usepackage{amssymb}
\usepackage{stmaryrd}
\usepackage{algorithm}
\usepackage{algpseudocode}
\usepackage{xspace}
\usetikzlibrary{calc,fit}
\definecolor {processblue}{cmyk}{0.96,0,0,0}

\DeclareMathOperator{\lfp}{lfp}
\newcommand{\undet}{\ensuremath{\rlap{$\top$}\bot}}
\newcommand{\threeval}{\textbf{3}}

\spnewtheorem{notation}{Main Theorem}{\itshape}{\rmfamily}

\newcommand{\diam}[1]{\Diamond_{#1}\,}
\newcommand{\F}[1]{F_{#1}\,}

\newcommand{\chain}[1]{\textmd{chain}_{#1}}

\newcommand{\sem}[3]{\ensuremath{{\left\llbracket{#1}\right\rrbracket_{#2}({#3})}}}
\newcommand{\semExt}[3]{\ensuremath{{\left\llbracket{#1}\right\rrbracket_{#2}^{e}({#3})}}}
\newcommand{\semfun}[2]{\ensuremath{{\left\llbracket{#1}\right\rrbracket_{#2}}}}
\newcommand{\semextfun}[2]{\ensuremath{{\left\llbracket{#1}\right\rrbracket_{#2}^{e}}}}

\newcommand{\rsemextfunbis}[2]{\ensuremath{{\left\llbracket{#1}\right\rrbracket_{#2}^{c}}}}
\newcommand{\semwofp}[5]{\ensuremath{{\left\llbracket{#1}\right\rrbracket_{#2}({#3,#4,#5})}}}

\newcommand{\bigland}{\bigwedge}
\newcommand{\biglor}{\bigvee}

\newcommand{\fvar}{\ensuremath{\varphi}}

\newcommand{\spaceHorizon}[1]{\ensuremath{\bullet_{#1}}}
\newcommand{\spb}{\: | \:}
\newcommand{\mutgl}{$\mu$-TGL\xspace}

\newcommand{\Reach}[4]{\ensuremath{#3 \,\mathcal{R}^{#1}_{#2}\, #4}}
\newcommand{\Escape}[4]{\ensuremath{#3 \,\mathcal{E}^{#1}_{#2}\, #4}}

\newcommand{\map}[3]{#1\:\colon\:#2\to#3}
\newcommand{\agents}{\mathcal{A}}
\newcommand{\reals}{\mathbb{R}}
\newcommand{\ereals}{\overline{\mathbb{R}}}
\newcommand{\preals}{\mathbb{R}_{\geq 0}}
\newcommand{\tgraph}{\sigma}

\algrenewcommand\algorithmicrequire{\textbf{Input:}}
\algrenewcommand\algorithmicensure{\textbf{Output:}}

\begin{document}

	\title{Monitoring Diameters of Causal Communication Graph with Spatio-Temporal Logic}
	\titlerunning{Monitoring Diameters of Causal Communication Graph}
	%
	\author{Lydia Bakiri\orcidID{0009-0003-8655-6455} \and
	Jérémy Dubut\orcidID{0000-0002-2640-3065} \and
	Sergio Mover\orcidID{0000-0003-1029-9547} 
	} 
	\authorrunning{L. Bakiri et al.}
	%
	\institute{LIX, CNRS, École polytechnique, Institut Polytechnique de Paris, Palaiseau, France\\ 
	\email{firstname.lastname@polytechnique.edu}
	}
	\maketitle          
	
	\begin{abstract}
		Verification of multi-agent systems requires the ability to check meticulous 
		topological properties when it comes to agents that can move through 
		space in continuous time. This demands a logic with sufficient 
		expressiveness to capture these dynamics. 
		\mutgl logic has interesting properties for expressing entangled 
		space-time properties. However, this logic lacks the expressivity needed 
		to analyse reachability within specific distance bounds, 
		or to track the length or the cost of communication chains: 
		these are fundamental for decentralized monitoring, or graph-theoretic 
		analysis of distributed protocols, where algorithmic complexities 
		often relates with the 
		system's communication graph diameter. We then introduce an extension of 
		\mutgl, including a new operator called the space horizon. This 
		addition allows us to bound the distance of communication chains, 
		hence enhancing the logic's expressiveness. We show that this operator 
		allows to encode modalities from other logics, such as reachability or 
		escaping which were not available in vanilla \mutgl, while allowing a deeper 
		entanglement of spatial and temporal properties. We provide a centralized 
		offline monitoring algorithm for this logic and illustrate it on several examples 
		on simulations of Consensus-Based Bundle Algorithms,
		distributed protocols for task allocation.
	\end{abstract}
	
	\section{Introduction}

	Increasingly, critical infrastructures are being deployed as systems of systems 
	that are highly distributed, autonomous, and interconnected. 
	Such systems are becoming progressively more complex due to the growing number 
	of interacting components and their dynamic environments. 
	As a consequence, providing strong assurances regarding their correctness, 
	safety, and overall behavior is becoming increasingly challenging. 
	A central concern is the robust and efficient realization of distributed 
	coordination mechanisms, such as consensus, task assignment, rendez-vous 
	protocols, and related collective behaviors. 
	These mechanisms must remain operational in the presence of failures or 
	unexpected disruptions while simultaneously minimizing communication overhead, 
	execution time, energy consumption, and other resource costs. 
	Achieving these objectives strongly depends on the structure of the underlying 
	communication graph. 
	Robustness requirements often demand sufficient connectivity to tolerate 
	failures and preserve information flow, whereas efficiency considerations 
	require properties such as a limited graph diameter to reduce protocol rounds 
	and resource consumption. 
	The problem becomes significantly more challenging when communication 
	topologies evolve over time, as occurs naturally in systems whose components 
	move through physical space, such as fleets of autonomous vehicles or swarms 
	of drones. 
	In such settings, desirable graph properties become intricate entanglement 
	of spatial and temporal constraints. 
	Their analysis requires particular care because causality effects emerge 
	naturally: communication relations are inherently directed in time, and 
	chains of interactions cannot be treated as static connections. 
	For example, if subsystem A communicates with B at time $t_0$, and B 
	subsequently communicates with C at time $t_1 > t_0$, information may flow 
	from A to C, but not necessarily in the reverse direction. 
	Temporal ordering therefore induces asymmetric information dependencies 
	that must be explicitly accounted for when reasoning about correctness and 
	performance.

	Given the complexity and dynamic nature of these systems, applying heavyweight 
	formal verification techniques at the scale of realistic systems of systems 
	quickly becomes impractical. 
	Instead, we focus on lightweight assurance methods, and in particular on 
	monitoring techniques. 
	Our setting is especially well-suited to centralized monitoring architectures, 
	as many real-world systems of systems involve a human operator responsible for 
	high-level strategic decisions while the individual subsystems operate 
	autonomously at a lower level, 
	as it is the case for \emph{directed} systems of systems~\cite{maier98}. 
	In this context, online monitoring can serve several complementary purposes. 
	It may provide feedback to assist the human operator in adapting or correcting 
	global strategies whenever potentially unsafe communication patterns or 
	undesirable emergent behaviors are detected. 
	It can also be integrated directly into a control loop, where it 
	acts as a runtime filtering mechanism capable of rejecting 
	decisions or execution paths predicted to lead to communication 
	failures or degraded performance, as a safety shield 
	during the learning of optimal policies~\cite{MA18}, or as a 
	component of a decision module responsible for selecting the
	appropriate controller within a Simplex architecture~\cite{DS98}.
	However, enabling such capabilities requires addressing two fundamental 
	challenges. 
	First, one needs a specification language expressive enough to capture the 
	intricate interplay between spatial and temporal communication properties, 
	including the causality constraints induced by evolving interaction patterns. 
	Second, one needs effective monitoring algorithms capable of evaluating these 
	specifications.

	In this work, we consider \mutgl (Timed Graph Logic with fixpoint), a 
	spatio-temporal logic previously introduced in \cite{ER22}. 
	This logic combines ideas from Metric Interval Temporal Logic (MITL~\cite{AH94}) and the 
	modal $\mu$-calculus~\cite{DK83}. It includes temporal modalities similar to the MITL F 
	operator, as well as spatial modalities expressing the existence of agents 
	satisfying a given property in the vicinity of the current agent. 
	Crucially, \mutgl also incorporates fixpoint operators. 
	While the temporal and spatial modalities themselves are relatively local in 
	nature—contrary to more global operators such as temporal until or graph 
	reachability—the addition of fixpoints makes it possible to express 
	complex recursive patterns that tightly interleave spatial and temporal 
	reasoning. 
	Such expressive power is essential for capturing the kinds of properties 
	discussed above, where information propagation depends simultaneously on 
	connectivity structures, movement patterns, and causality constraints. 
	In particular, this framework enabled the specification of notions such as 
	causal robust connectivity, a property extending robust connectivity,
	required to guarantee consensus in 
	the presence of faulty agents~\cite{LB13}, to the context of dynamically evolving communication graphs. 
	A major challenge in this setting lies in designing semantics suitable for 
	online monitoring. Following standard approaches, we rely on a three-valued 
	semantics~\cite{MB12} in which an indeterminate value represents situations where the 
	available observations are still insufficient to decide whether a formula is 
	satisfied or violated. 
	However, the presence of fixpoint operators introduces significant 
	additional difficulties. 
	In particular, care is required to manage computationally the 
	notion of a time horizon, representing the amount of future information that 
	may be needed before safely concluding that a property cannot be satisfied.

	In this paper, we address a limitation of the current logic. 
	While \mutgl allows one to constrain temporal propagation through its notion 
	of time horizon, it currently lacks the ability to express that information 
	propagates through communication chains that are spatially bounded, namely 
	chains whose length remains limited in terms of communication hops or spatial 
	cost. 
	Such a capability is important in several related frameworks. 
	For instance, the STREL logic~\cite{LN20,LN22} introduces more global spatio-temporal operators 
	such as reachability and escape, which have been successfully used to model 
	phenomena including network routing protocols and virus propagation. 
	In our context, this type of property is equally important for the 
	monitoring of distributed protocols. 
	The ability to constrain spatial propagation would make it possible to 
	specify and monitor properties such as bounded communication diameter, 
	which directly impacts the number of rounds required by many distributed 
	algorithms, or bounded communication cost, which is closely related to energy 
	consumption and resource usage. 
	The syntactic extension proposed in this paper is remarkably simple: we 
	introduce a space horizon operator that mirrors the role played by the 
	time horizon mechanism in standard \mutgl. However, extending the semantics 
	and monitoring procedures to support this additional operator is substantially 
	more challenging. 
	The space budget must be carefully propagated and managed throughout the 
	evaluation process, while the presence of fixpoint operators again 
	introduces significant technical difficulties. 
	In particular, ensuring that fixpoints remain computable 
	through a finite number of iterations requires substantial modifications to 
	both the semantic framework and the monitoring algorithm.

	\subsection*{Related Work}
	\textbf{(Spatio-)Temporal Logics.}
	Our work is based on \mutgl~\cite{ER22}. The closest logic 
	is STREL~\cite{LN20,LN22}
	-- and by extension, its ancestors 
	Spatial-Temporal Logic~\cite{IH15} (SpaTeL) and
	Signal Spatio-Temporal 
	Logic~\cite{LN18} (SSTL)
	-- however they cannot express suitable 
	causal properties because their modalities are not sufficiently 
	fine-grained in their interactions between time and space: their space modalities 
	only look at the communication graph at a fixed time. 
	Similarly, Counting Linear Temporal Logic~\cite{YES20} (cLTL) enables the 
	specification of desired behaviors in multi-agent systems by 
	allowing constraints on the number of agents satisfying particular 
	LTL properties. 
	The recent Hybrid Spatio-temporal Logic~\cite{RFT26} (HSTL) 
	combines ideas from both spatio-temporal and 
	hybrid logics and is targeting the safety assurance of 
	fleets of autonomous vehicle in highway scenarios.
	Although Signal Temporal Logic (STL) is not, 
	strictly speaking, a spatio-temporal logic, spatial aspects can 
	nevertheless be encoded within its atomic propositions, allowing 
	certain spatial properties to be represented indirectly.
	More generally, logics for the verification of multi-agent 
	systems is a hot topic, see for example Alternating-Time 
	Temporal Logic~\cite{RA02} and its various extensions and fragments.
	
	\textbf{Distributed Protocols.}
	The motivation for this work is to be able to monitor 
	spatio-temporal properties to ensure that some distributed protocols 
	can be executed, even when agents can fail in a malicious 
	way (Byzantine). The distributed protocols we are interested in 
	are those that allow the agents to agree: 
	consensus~\cite{MP80}, 
	task allocation typically through auctions~\cite{CH09}, 
	gathering~\cite{CC08}, etc.

	\subsection*{Contributions} We extend both 
	the syntax and the semantics of \mutgl, to be able 
	to specify properties on the diameter of 
	causal communication graphs or the length of communication chains.
	We implement an offline monitoring algorithm, 
	implementing this semantics, and carefully prove its 
	termination. We illustrate its feasibility on 
	simulations of agents following Consensus-Based 
	Bundle Algorithms, distributed protocols for task 
	allocation.

	\section{Recap on \mutgl}
	
\mutgl and the extension introduced in this paper are 
logics over \emph{timed graphs}, that is, signals whose 
values are weighted graphs. 
Such a graph represents a network of agents: 
its vertices correspond to the agents themselves, 
while an edge between two agents indicates that they 
can communicate. 
The weight associated with an edge captures 
quantitative information about the communication, 
such as the distance between the agents, 
the energy required to communicate, or a similar metric.
Fix a set $\agents$ of \emph{agents}. 
A timed graph is simply a function
$\map{\tgraph}{\preals \times \agents^2}{\ereals_{\geq 0}}$,
which maps a time and a pair of agents to the cost of
communication between them, 
with the value $+\infty$ indicating that communication 
is impossible.
Typically, agents move within a Euclidean space 
$\reals^n$, and the weights of a timed graph are 
obtained by applying a \emph{weighting function} to the 
distances between agents. Examples include:
1) the identity function, in which case the weights are the distances themselves;
2) the constant function equal to $1$, in which case the weights only encode the existence of a communication link (referred to as \emph{hops} in~\cite{LN22});
3) an energy function of the form $E(d)=d^\gamma$, where $\gamma$ is the path-loss exponent, whose value depends on the environment~\cite{TR02}.
The primary purpose of \mutgl is to express 
specifications such as: 
``it is possible to send a message to agent 
$a$ through a causal chain of communications 
within a given time bound.'' 
Let $\chain{a}$ denote such a formula. 
The semantic value of $\chain{a}$ depends on: 
1) the timed graph, 
2) the agent from which the message originates, 
3) the time at which the message is sent, and 
4) the available time budget.

\begin{example}[Horizons]
Let us illustrate this on this timed graph

\begin{center}
	\begin{tikzpicture}[
		every node/.style={font=\large},
		vertex/.style={circle,fill=black,minimum size=8pt,inner sep=0pt},
		scale=0.8
	]

	\node[vertex] (a) at (0,6) {};
	\node[left=.5cm] (la) at (a) {$a$};

	\node[vertex] (b) at (0,3.5) {};
	\node[vertex] (c) at (1,3.5) {};

	\draw[thick] (b)--(c);
	\node[above=0.15cm] at ($(b)!0.5!(c)$) {$1$};

	\node[left=.5cm] (lb) at (b) {$b$};
	\node[right=.5cm] (lc) at (c) {$c$};

	\node[vertex] (d) at (1,1) {};
	\node[vertex] (e) at (0,1) {};

	\draw[thick] (d)--(e);
	\node[above=0.15cm] at ($(d)!0.5!(e)$) {$1$};

	\node[right=.5cm] (ld) at (d) {$d$};
	\node[left=.5cm] (le) at (e) {$e$};

	\node[
		draw,
		dashed,
		rounded corners,
		fit=(la)(le)(ld)(lc),
		inner sep=0.1cm
	] {};
	\end{tikzpicture}
	\qquad\qquad
	\begin{tikzpicture}[
		every node/.style={font=\large},
		vertex/.style={circle,fill=black,minimum size=8pt,inner sep=0pt},
		scale=0.8
	]

	\node[vertex] (a) at (0,6) {};
	\node[left=.5cm] (la) at (a) {$a$};

	\node[vertex] (b) at (0,3.5) {};
	\node[vertex] (c) at (1,3.5) {};

	\draw[thick] (b)--(c);
	\node[above=0.15cm] at ($(b)!0.5!(c)$) {$1$};

	\node[left=.5cm] (lb) at (b) {$b$};
	\node[right=.5cm] (lc) at (c) {$c$};

	\node[vertex] (d) at (1,2.5) {};
	\draw[thick] (c)--(d);
	\node[right=0.15cm] at ($(c)!0.5!(d)$) {$1$};
	\node[right=.5cm] (ld) at (d) {$d$};

	\node[vertex] (e) at (0,1) {};
	\node[left=.5cm] (le) at (e) {$e$};

	\node[
		draw,
		dashed,
		rounded corners,
		fit=(la)(le)(ld)(lc),
		inner sep=0.1cm
	] {};
	\end{tikzpicture}
	\qquad\qquad
	\begin{tikzpicture}[
		every node/.style={font=\large},
		vertex/.style={circle,fill=black,minimum size=8pt,inner sep=0pt},
		scale=0.8
	]

	\node[vertex] (a) at (0,6) {};
	\node[left=.5cm] (la) at (a) {$a$};
	\node[vertex] (b) at (0,5) {};
	\node[left=.5cm] (lb) at (b) {$b$};
	\draw[thick] (a)--(b);
	\node[left=0.15cm] at ($(a)!0.5!(b)$) {$1$};

	\node[vertex] (c) at (1,3.5) {};
	\node[right=.5cm] (lc) at (c) {$c$};
	\node[vertex] (d) at (1,2.5) {};
	\draw[thick] (c)--(d);
	\node[right=0.15cm] at ($(c)!0.5!(d)$) {$1$};
	\node[right=.5cm] (ld) at (d) {$d$};

	\node[vertex] (e) at (0,1) {};
	\node[left=.5cm] (le) at (e) {$e$};

	\node[
		draw,
		dashed,
		rounded corners,
		fit=(la)(le)(ld)(lc),
		inner sep=0.1cm
	] {};

	\end{tikzpicture}
\end{center}
where the left graph is at time $0$, the center one 
is at time $1$ and the right one is from time $2$.
When there is an edge (or a direct communication) 
between agents $a$ and $b$ at a given time $t$, 
then $\chain{a}$ is true for agent $b$ at time $t$, 
for any time budget. 
Now observe that $d$ does not communicate with $c$ 
at time $0$, but does so at time $1$. 
This means that, for $\chain{c}$ to be true at time 
$0$ for agent $d$, we need a time budget of at 
least $1$. Since $c$ communicates with $b$ until 
time $2$, $\chain{b}$ is also true for $d$ at time 
$0$ with a time budget of $1$. 
Similarly, since $d$ and $e$ communicate at time 
$0$, there is a causal communication chain 
$e \to d \to c \to b$ that only requires a time 
budget of $1$. Observe, however, that the reverse 
communication chain is not causal because $d$ and 
$e$ communicate at time $0$, while $d$ and $c$ 
communicate at time $1$. 
Therefore, $e$ does not receive information from 
$c$ via $d$, which means that $\chain{e}$ is false 
for agent $c$ at time $0$ for any time budget 
smaller than $2$.
In addition, to accommodate online monitoring, 
the semantics must take into account that we may 
not have enough information about the timed graph, 
as samples are received when available. 
For example, imagine that we want to monitor 
whether there is a causal communication chain from 
$e$ to $a$ within a time budget of $2$ at time $0$. 
From the analysis above, we know that this property 
is true if we have access to the three samples 
above. 
Now assume that we have only received the first 
two samples, but not the third one. 
In that case, we have not found a causal chain, but 
we have not yet received enough information to 
exhaust the allocated time budget. 
Therefore, the semantics should indicate that the 
truth value is not yet determined.
\end{example}

To write such specifications, \mutgl relies on modalities
for both time and space constraints. Precisely, we
have three modalities:
$\F{T}\fvar$, the usual ``eventually'' from MITL;
$\diam{D}\fvar$, similar to $\F{}$ but for space,
requiring the existence of an agent at a distance in
the interval $D$; and
$H_h\fvar$, which fixes a time budget of $h$;
together with the usual Boolean connectives and atomic
propositions, which correspond to individual agents.
Given a timed graph $\tgraph$, the
semantics of such a formula $\phi$ is then, as described
above, a function of type
$\map{\semfun{\phi}{\tgraph}}{\preals\times\reals\times\agents}{\threeval}$,
where $\threeval$ is the three-valued ordered set $\bot \leq \undet \leq \top$,
consisting of Booleans extended with an indeterminate value.
The intention is that $\semwofp{\fvar}{\tgraph}{t}{h}{a}$ gives the truth
value of $\fvar$ for agent $a$ at time $t$ with time budget $h$.
In particular, for the three modalities:
\begin{itemize} 
	\item $\semwofp{\F{T}\:\fvar}{\tgraph}{t}{h}{a}= 
		\biglor_{t' \in T}\semwofp{\fvar}{\tgraph}{t+t'}{h-t'}{a}$ 
	\item $\semwofp{\diam{D}\:\fvar}{\tgraph}{t}{h}{a}= 
		\left\{ \begin{array}{lr} 
			\biglor_{b\in A}\semwofp{\fvar}{\tgraph}{t}{h}{b} \land \tgraph(t,a,b) \in D &\text{ if } h \geq 0\\ 
			\undet &\text{ otherwise } 
		\end{array} \right.$ 
	\item $\semwofp{H_{h'}\:\fvar}{\tgraph}{t}{h}{a}= 
		\left\{ \begin{array}{lr} 
			\bot &\text{ if } \semwofp{\fvar}{\tgraph}{t}{h'}{a} = \undet \text{ and } h \geq h'\\ 
			\semwofp{\fvar}{\tgraph}{t}{\min(h,h')}{a} &\text{ otherwise } 
		\end{array} \right.$ 
\end{itemize}
In particular, the time budget works as follows:
whenever we try to access the timed graph without having any
remaining budget, the result is indeterminate, which is
the case in $\diam{D}$. This indeterminate value is then resolved
to false when encountering a modality $H_{h'}$ with $h'$ small enough:
if we could not determine the truth value within the time
budget, then the formula is false.

Finally, in order to iterate these local time and space
properties, \mutgl contains a fixpoint operator $\mu X.\:\fvar$.
Its semantics is defined as a least fixpoint.
When a formula $\fvar$ has $X$ as a free variable, i.e.,
not bound by a $\mu$, the semantics $\semfun{\fvar}{\tgraph}$
can be seen as a function from
$\preals\times\reals\times\agents\to\threeval$ to itself.
This set is a complete lattice under the pointwise order.
Furthermore, as usual, when $\fvar$ is \emph{positive},
i.e., every branch of $\fvar$ ending in $X$ contains an even
number of negations, this function is monotone.
Then, by the Cousot--Cousot theorem~\cite{PC79},
the least fixpoint exists and can be computed by
iteratively applying the semantic function to the least
element of the lattice (namely, the constant function equal
to $\bot$).

\begin{example}[Horizons, continued]
For example, we can define:
\[\chain{a} = \mu X. \left(a \lor \F{[0,1]}\,\diam{[0,1]}\,X\right).\]
Starting from the constant function $f_0$ equal to $\bot$ and applying
the function $\semfun{a \lor \F{[0,1]}\diam{[0,1]}\:X}{\tgraph}$ once,
we obtain $f_1(t,h,\text{ag})= \top$ if $\text{ag}=a$, and $\bot$ otherwise.
After another iteration, we obtain that, for example, $f_2(0,h,\text{ag})$ is
$\top$ when $\text{ag}=a$, or when $\text{ag}=b$ and $h\geq 1$.
After enough iterations, we obtain that
$f_k(0,h,e)$ is $\top$ if $h\geq 2$ and $\undet$ otherwise.
This is actually the case for the least fixpoint as well, i.e.,
for $\semwofp{\chain{a}}{\tgraph}{0}{h}{e}$.
This means, in addition, that:
\[\semwofp{H_1\:\chain{a}}{\tgraph}{0}{2}{e} = \bot \qquad\qquad
\semwofp{H_2\:\chain{a}}{\tgraph}{0}{2}{e} = \top.\]
\end{example}

	\section{Space horizon for \mutgl}

	In this section, we extend the syntax and the semantics 
	of \mutgl to be able to write specifications such that
	``agent $a$ can send a message to agent $b$ with 
	a communication chain that costs less than $p$'' or 
	``agent $a$ can send a message that will propagate to 
	a distance more than $p$''.
	Those specifications are the Reach and Escape formulas 
	that are part of the syntax of STRel~\cite{LN22}.
	To follow the line of \mutgl, we will not have those 
	specifications natively in our extension because they 
	are global properties on the graph. Instead, we will 
	\emph{define} them thanks to our local modalities and the 
	usage of a fixpoint. It turns out that the only piece missing
	in the syntax is a \emph{space horizon}, similar to 
	the modality $H_h$ of vanilla \mutgl. This has two 
	advantages: 1) the modification of the logic is seemingly 
	minimal (although some non-trivial work has to be done 
	in the semantics) 2) this allows to defined Reach and 
	Escape in a causal and timed way. Indeed, for the latter,
	in STRel, those constructions are untimed in the sense 
	that they only look at the communication graph at a 
	fixed time.
	Concretely, 
	\begin{definition}[Agent Formulas]
		\emph{Agent formulas} are given by the following grammar:
		\[
			\fvar ::=  \top \spb p \spb \neg \fvar \spb \fvar \land \fvar \spb 
			\diam{D} \fvar \spb \F{T} \fvar \spb 
			X \spb \mu X. \fvar \spb H_h \phi  \spb 
			a \spb \exists a.\fvar 
			\spb \textcolor{red}{\spaceHorizon{s} \fvar},
		\]
		where the only addition to the syntax of \mutgl is  
		the modality $\spaceHorizon{s} \fvar$, the space horizon.
		In this grammar: $p$ is a predicate, $D$ and $T$ are closed intervals 
		of non-negative reals, $a$ is an agent variable, $X$ is a 
		fixpoint variable, and $h$ and $s$ are non-negative 
		reals.
	\end{definition}

	\begin{example}[Horizons, continued]
	We have seen that for 
	$\semwofp{\chain{a}}{\tgraph}{0}{h}{e}$ to be true, 
	we need a time budget of at least 2, so that we have 
	a causal communication chain from $e$ to $a$, 
	meaning essentially that $e$ satisfies $H_2\:\chain{a}$ but 
	not $H_1\:\chain{a}$ at time $0$. 
	Now observe that the unique causal communication 
	chain from $e$ to $a$ has length $4$. The intention 
	is then that $e$ satisfies $\spaceHorizon{s}\:H_h\:\chain{a}$
	at time $0$ if and only if $s\geq 4$ and $h\geq 2$.
	\end{example}

	The goal now is to formalize this intention in the 
	semantics. As for the time horizon, we need to keep track 
	explicitly of the space horizon in the semantics. 
	As such, the extended semantics will be a function 
	of type
	$\map{\semextfun{\fvar}{\sigma,\rho}}{\preals\times\reals\times\ereals\times\agents}{\threeval}$
	where $\sigma$ is a timed graph and $\rho$ is a valuation
	mapping every non-binded agent variable to an agent and every 
	non-binded fixpoint variable to a function of the same type as 
	the extended semantics. The arguments of those functions are as follows:
	1) the time $t\geq 0$ at which we check the satisfaction, 
	2) the time horizon $h$,
	3) the space horizon $s$, and
	4) the agent for which we are checking the satisfaction.
	Observe the crucial point that the space horizon 
	is an \emph{extended} real, meaning that it can 
	be infinite. The reason is we need to keep track of 
	the semantics when we have the full visibility of the 
	graph at a given time, which is not necessarily 
	the same as having the visibility with a large enough
	horizon as we will see later.
	Concretely, the extended semantics is defined 
	by induction on the formula as follows (we only 
	focus on the interesting cases)
	\begin{figure}[t]
	\begin{align*}
		\semExt{\diam{D}\: \fvar}{\sigma,\rho}{t, h, s,a,b} &=  \left\{ 
		\begin{array}{lr} 
			\top \; &\text{ if } \sigma(t,a,b) \in D\\ &\land \; \sigma(t,a,b) \leq s \\ &\land \; \semExt{\fvar}{\sigma,\rho}{t, h, s-\sigma(t,a,b),b} = \top\\ &\\
			\undet \; &\text{ if } \left(\sigma(t,a,b) > s \land \max(D) > s\right) ~\lor\\
			&\big(s \geq \sigma(t,a,b) \land \sigma(t,a,b) \in D \\ &\;\;\;\; \semExt{\fvar}{\sigma,\rho}{t, h, s-\sigma(t,a,b),b} = \undet\big)\\ &\\
			\bot &otherwise
		\end{array}
		\right.\\
		\semExt{\diam{D}\: \fvar}{\sigma,\rho}{t, h, s,a}&= 
		\left\{ 
		\begin{array}{lr} 
			\biglor_{b \in A} \semExt{\diam{D}\: \fvar}{\sigma,\rho}{t, h, s,a,b} &\text{ if } h \geq 0\\
			\undet &\text{ otherwise }
		\end{array}
		\right.\\
		\semExt{\spaceHorizon{s'}\:\fvar}{\sigma,\rho}{t, h, s, a}&= \left\{ 
		\begin{array}{lr} 
			\bot &\text{ if } \semExt{\fvar}{\sigma,\rho}{t,h,s',a} = \undet\\ 
				&\;\;\;\;\land~ s \geq s'\\
			\semExt{\fvar}{\sigma,\rho}{t, h, \min(s,s'), a} &\text{ otherwise }
		\end{array}
		\right. \\
		\semextfun{\mu X.\: \fvar}{\sigma,\rho}&= 
			\lfp(f \mapsto \semExt{\fvar}{\sigma,\rho [X\rightarrow f]}{\sigma,-,-,-,-})
	\end{align*}
	\caption{Extended semantics.}
	\label{fig:ext-sem}
	\end{figure}
	\begin{definition}[Extended Semantics]
	We define the extended semantics as in Figure~\ref{fig:ext-sem}
	where the least fixpoint is taken in the 
	complete lattice of functions of type 
	$\preals\times\reals\times\ereals\times\agents\to\threeval$ 
	with the pointwise order. The rest is defined as 
	in vanilla \mutgl.
	\end{definition}

	Observe that the semantics of the space-horizon 
	modality is similar to the time-horizon. Most of 
	the work is done in the semantics of the spatial 
	modality $\Diamond$. The intention of $\Diamond_{D}\:\fvar$ 
	is to check if there is another agent satisfying $\fvar$, 
	for which the distance is within the interval $D$.
	In vanilla \mutgl, that is where indeterminate 
	can appear: this modality is where we actually have 
	to have access to the graph, and if we do not have 
	time budget anymore, we cannot access the graph, 
	and we cannot know if there exists such an agent.
	In our extension, there is another case where this 
	could happen because of the space budget. 
	An edge between two agents $a$ and $b$ is valid for this 
	modality if its weight is within the interval $D$ 
	\emph{and within the space budget}. In that case, 
	if furthermore $b$ satisfies $\fvar$ with the discounted 
	space horizon, then $b$ is a witness that $a$ 
	satisfies $\Diamond_{D}\:\fvar$. If no $b$ satisfy 
	these requirements, but if the space horizon $s$
	is smaller than the maximal value of $D$ (meaning 
	that the current space budget does not allow to cover 
	the whole interval $D$) and if the weight of the 
	transition between $a$ and $b$ is bigger than $s$ 
	(i.e., the current space horizon does not allow 
	to see this transition), then $b$ may be a witness
	but we cannot decide yet.

	\begin{example}[Horizons, continued]
		\label{ex:last-horizon}
	
	With this new semantic, the expression $chain_a$ evaluates as follows: at the base step of the fixpoint construction, the formula holds for agent $a$ only, and leaves the other agents either false or indeterminate due to lack of horizon: 
	\[
	f_1(t,h,s,ag) = 
	\left \{
	\begin{array}{ll}
	\top &\text{ if } ag = a \\
	\undet &\text{ if } ag \neq b \text{ and } s < 1 \lor h = 0 \\
	\bot &\text{ otherwise }\\
	\end{array}
	\right.
	\]
	
	As the iteration count increases, the recursive application of the semantic allows other agents to be valid dependently on the budget allowed for space horizon. For example, agent $b$ will satisfy the constraints as long as the time horizon permits to see the third graph and $s \geq 1$.
	After completing the full fixpoint computation, we end up with a semantic that confirms that:
	\[
	\semextfun{\chain{a}}{\sigma,\rho}(0,2,4,e)=\top \qquad\qquad
	\semextfun{\chain{a}}{\sigma,\rho}(0,2,s,e)=\undet
	\]
	for $s<4$. This gives the following intended result: 
	\[
	\semextfun{\spaceHorizon{3}\:H_2\:\chain{a}}{\sigma,\rho}(0,2,4,e)=\bot \qquad\qquad \semextfun{\spaceHorizon{4}\:H_2\:\chain{a}}{\sigma,\rho}(0,2,4,e)=\top
	\]
	
	Also note that in this example, a naive computation 
	of the fixpoint takes a infinite amount of time. 
	See what is happening at time 0 and horizon 2: 
	$f_n(1,1,s)$ keeps all agents either $\top$ or 
	$\undet$ if $s<n$. But when $s\geq n$, all 
	indeterminate agents become $\bot$, hence 
	determinizing all agent as if there were no 
	restriction. Future iterations propagate these 
	indeterminate agents, leading to an infinite computation 
	if we don't take that into account. 
	In this case, the fixpoint result of 
	$\semextfun{\chain{a}}{\sigma,\rho}(1,1,s)$ would 
	give the agent $e$ as indeterminate  when 
	$s \in [5,\infty)$, and $\bot$ at $s = \infty$.
	\end{example}

	As we saw, 
	the semantics is not 
	continuous: we cannot recover 
	the semantics for $s=+\infty$ as the limit 
	of a sequence of values of the semantics for 
	finite values of the space horizon.
	However, it is monotone in the space horizon:

	\begin{lemma}
		\label{lem:monotony}
		For all agent formula \fvar, monotone context $\rho$, timed graph $\sigma$, $t \in \preals$, $h\in\reals$, $s \leq s'\in\ereals$, and $a\in\agents$: 
		\[ \semExt{\fvar}{\sigma,\rho}{t,h,s,a} \preceq \semExt{\fvar}{\sigma,\rho}{t,h,s',a}
		\]
		where $\preceq$ is the information partial order $\undet \preceq \top,\bot$
		and
		a context is monotone if for all fixpoint variable $X$
		$\rho(X)$ is monotone in the space horizon as above.
	\end{lemma}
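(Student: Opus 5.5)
We prove the statement by structural induction on $\fvar$, strengthening it to quantify over all monotone contexts $\rho$; the fixpoint case then reduces to a transfinite induction along the Kleene/Cousot--Cousot iteration. First we record two elementary facts about the information order $\preceq$ on $\threeval$. Negation preserves $\preceq$. Binary (and arbitrary) $\land$ and $\lor$ are monotone in $\preceq$ on each argument: if each argument only moves from $\undet$ to a definite value, the result can only move from $\undet$ to a definite value, since a definite result is already forced by the definite arguments. With these facts, the cases $\top$, $p$, $a$, $X$ (by monotonicity of $\rho$), $\neg$, $\land$, $\exists a$, $\F{T}$ and $H_h$ are routine. In each of them the space horizon is passed unchanged to subformulas, and the connective is built from $\preceq$-monotone operations; for $H_h$ the test ``$=\undet$'' concerns the time horizon only and is evaluated at the same $s$.

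The space-horizon modality $\spaceHorizon{s'}\fvar$ is handled by cases on $s\le s'$. Suppose first $s_1\le s_2$ are both $<s'$. Then the $\bot$ branch never fires, and we compare $\semExt{\fvar}{}{t,h,s_1,a}$ with $\semExt{\fvar}{}{t,h,s_2,a}$ using the induction hypothesis. If both are $\ge s'$, the value does not depend on $s$. In the mixed case $s_1<s'\le s_2$, the left value is $\semExt{\fvar}{}{t,h,s_1,a}$. If it is $\undet$, we are done. Otherwise, by the induction hypothesis $\semExt{\fvar}{}{t,h,s',a}$ equals that same definite value, so the right side returns it through the ``otherwise'' branch.

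The core case is $\diam{D}\fvar$, which we handle pointwise in $b$ before taking the join over $b$. Fix $w=\sigma(t,a,b)$ and $s_1\le s_2$. If $w>s_2$, both horizons are too small to see the edge: the value is $\undet$ or $\bot$ depending on whether $\max(D)>s$. We must check that it cannot pass from $\bot$ at $s_1$ to $\undet$ at $s_2$. $\bot$ at $s_1$ with $w>s_1$ means $\max(D)\le s_1\le s_2$, so the value is also $\bot$ at $s_2$. If $w\le s_1$, both values are computed from $\fvar$ at $s_i-w$, and the induction hypothesis (with $s_1-w\le s_2-w$, where $\infty-w=\infty$) gives the claim. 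The mixed case $s_1<w\le s_2$ is where I expect the main care to be needed. At $s_1$ the value is $\undet$ if $\max(D)>s_1$, which is fine. Otherwise $\max(D)\le s_1<w$, so $w\notin D$, and at $s_2$ the value is again $\bot$.

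Finally we handle $\mu X.\fvar$, and I expect this to be the main obstacle, because the least fixpoint is taken for the truth order $\le$, not for $\preceq$. We show by transfinite induction that every iterate $f_\alpha$, starting from constant $\bot$, is $\preceq$-monotone in $s$. The base case is constant, and successors hold by the induction hypothesis applied to the monotone context $\rho[X\mapsto f_\alpha]$. The delicate step is limits, where $f_\lambda=\bigvee_{\alpha<\lambda}f_\alpha$ in the truth order. Here we use that the chain is increasing in $\le$ at each point. Consequently, at each argument the sequence stabilises after at most two changes ($\bot\to\undet\to\top$). The limit value at $s$ is therefore the eventual value $f_\alpha(s)$ for large $\alpha$, and choosing $\alpha$ large enough for both $s_1$ and $s_2$ transfers $\preceq$-monotonicity. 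We conclude that the lfp is monotone in the space horizon.
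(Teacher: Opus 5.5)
\textbf{Where the proof breaks.} The step that fails is $H_{h'}\fvar$, which you list among the routine cases. Its branch condition $\semExt{\fvar}{\sigma,\rho}{t,h',s,a}=\undet$ is evaluated at the \emph{current} space horizon $s$. The induction hypothesis explicitly allows this value to be $\undet$ at $s_1$ and $\top$ at $s_2$. For $h\geq h'$ the result is then $\bot$ at $s_1$ and $\top$ at $s_2$, and $\bot\not\preceq\top$. The test ``$=\undet$'' is not $\preceq$-monotone, so being evaluated at the same $s$ is the source of the problem, not a safeguard. In $\spaceHorizon{s'}$ the analogous test is at the fixed horizon $s'$, which is why your argument works there.

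\textbf{A concrete counterexample.} Take agents $a,b$ with $\sigma(t,a,b)=1$ and $\sigma(t,a,a)=0$. Then $\semExt{\diam{[1,1]}\top}{\sigma,\rho}{t,0,s,a}$ is $\undet$ for $s=1/2$ and $\top$ for $s=1$. Hence $\semExt{H_0\,\diam{[1,1]}\top}{\sigma,\rho}{t,0,s,a}$ is $\bot$ for $s=1/2$ and $\top$ for $s=1$. The paper's own Example~\ref{ex:last-horizon} shows the same thing: $H_2\,\chain{a}$ is $\bot$ at $(0,2,3,e)$ and $\top$ at $(0,2,4,e)$. That jump is exactly why $\spaceHorizon{3}$ and $\spaceHorizon{4}$ give different answers there.

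\textbf{What survives.} The statement, as claimed for all agent formulas, is false, so this case cannot be closed. The paper's proof has the same hole: it lists $H$ among the connectives preserving $\preceq$-monotonicity ``trivially''. The rest of your argument is sound and follows the paper's route. In the $\diam{D}$ case you split on the position of $w=\sigma(t,a,b)$ relative to $s_1\le s_2$ rather than on the output value, but the content is the same. Your explicit case split for $\spaceHorizon{s'}$ is correct where the paper just says ``trivially''. The fixpoint case, by transfinite iteration with pointwise stabilisation of $\le$-chains in the finite $\threeval$ at limits, is the paper's argument.

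Your argument therefore proves the lemma for the $H$-free fragment, for example for $\mu X.(a\lor\F{\leq 10}\diam{\leq 1}X)$. It does not cover $H_{10}$ applied to that formula. That is what sits under $\spaceHorizon{d}$ in the case-study formulas, and it is where the termination argument invokes monotonicity. To cover $H$ you would need either to restrict where $H$ may occur or to change its semantics so that its resolution test does not depend on $s$.
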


	The value at infinity, which means that we do 
	not put any restriction on the space horizon 
	we can see, actually correspond to the 
	vanilla semantics of \mutgl:
	
	\begin{lemma}
		\label{lem:vanilla-infinity}
		For formula \fvar\! of \mutgl, monotone context $\rho$, signal $\sigma$, $t,h \in \mathbb{R}_+$: 
		\[
		\sem{\fvar}{\rho}{\sigma, t, h} = \semExt{\fvar}{\rho}{\sigma, t, h,\infty}
		\]
		
	\end{lemma}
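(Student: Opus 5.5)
The proof goes by structural induction on the \mutgl formula $\fvar$, with the statement strengthened to handle free fixpoint variables. We relate a vanilla context $\rho$ to an extended context $\rho^e$ by requiring $\rho^e(X)(t,h,\infty,a) = \rho(X)(t,h,a)$ for every fixpoint variable $X$, with agent variables mapped identically. The claim to prove is then
\[
\sem{\fvar}{\rho}{\sigma,t,h,a} = \semExt{\fvar}{\rho^e}{\sigma,t,h,\infty,a}
\]
for all $t,h,a$. The cases $\top$, $p$, $a$, the Boolean connectives and $\exists a.\fvar$ follow directly from the induction hypothesis, since the two semantics are defined identically there. The $\F{T}$ and $H_h$ cases only shift or cap the time argument and leave the space horizon untouched, so $s=\infty$ is preserved and the induction hypothesis applies pointwise. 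The case of a variable $X$ is exactly the assumption on $\rho^e$. The space horizon $\spaceHorizon{s}$ never occurs, because $\fvar$ is a \mutgl formula.

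The $\diam{D}$ case is where the definitions differ. With $s=\infty$, the side conditions become trivial: $\sigma(t,a,b)\le\infty$ always holds, $\infty-\sigma(t,a,b)=\infty$ whenever the weight is finite, and the first indeterminacy clause $\sigma(t,a,b)>\infty$ never fires. An edge of weight $+\infty$ needs care, since we must check that $\infty\notin D$ because $D$ is a closed interval of reals, so that edge contributes $\bot$ in both semantics. The per-edge value therefore reduces to $\semExt{\fvar}{}{t,h,\infty,b}$ when $\sigma(t,a,b)\in D$, and to $\bot$ otherwise. This is exactly the three-valued conjunction $\sem{\fvar}{}{t,h,b}\land \sigma(t,a,b)\in D$, using the induction hypothesis. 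The outer disjunction over $b$ and the $h<0$ clause coincide verbatim.

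The main obstacle is the $\mu X.\fvar$ case, because the two least fixpoints live in different lattices. Write $\Phi^e$ for the extended semantic operator on functions $\preals\times\reals\times\ereals\times\agents\to\threeval$, and $\Phi$ for the vanilla one. First, the induction hypothesis applied to the body shows that $\Phi^e(g)$ restricted to $s=\infty$ equals $\Phi(g|_\infty)$ for any $g$. This uses only the values of $g$ at $s=\infty$: in a \mutgl body, every occurrence of $X$ is reached along $s=\infty$, since $\diam{D}$ maps $\infty$ to $\infty$. Hence the restriction map $r: g\mapsto g|_{\infty}$ satisfies $r\circ\Phi^e=\Phi\circ r$. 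The restriction $r$ is monotone and preserves arbitrary joins, because joins are pointwise and $r$ is evaluation at a slice. It also sends bottom to bottom.

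By the transfer lemma for least fixpoints (the Cousot--Cousot fixpoint iteration from~\cite{PC79}, transfinite if needed), $r(\lfp\Phi^e)=\lfp\Phi$. Concretely, we show by transfinite induction that the $\alpha$-th iterates satisfy $r(f^e_\alpha)=f_\alpha$: successor steps use the commutation $r\circ\Phi^e=\Phi\circ r$, and limit steps use join preservation. This gives the claim for $\mu X.\fvar$. Monotonicity of the context, via Lemma~\ref{lem:monotony}, is only needed so that the extended fixpoint is well defined and stays in the class of monotone contexts for nested fixpoints. For the nested case, extending $\rho^e$ with $X\mapsto g$ keeps the relation with $\rho$ extended by $X\mapsto r(g)$, which closes the induction.
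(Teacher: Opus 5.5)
Your proof is correct and follows the same route as the paper. It is a structural induction in which only $\diam{D}$ (unfolded at $s=\infty$, where the budget conditions become vacuous) and $\mu X.\fvar$ are nontrivial. Your restriction map $r$ commuting with the functional, together with the transfinite transfer of iterates, is a rigorous version of the paper's one-line ``obvious since we use the pointwise order on functions''. The related contexts $\rho,\rho^e$, the quantification over all $h$ (including negative ones), and the check on edges of weight $+\infty$ are details the paper glosses over.

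The only inaccuracy is your side remark on monotonicity. Well-definedness of the extended least fixpoint comes from positivity of $\fvar$ (monotonicity in the truth order), not from the space-horizon monotonicity of Lemma~\ref{lem:monotony}. Your argument never uses the latter, since your strengthened hypothesis only constrains contexts at $s=\infty$.
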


	\begin{example}[Causal Reach and Escape]
		\label{ex:causal-reach}
	
	The Reach and Escape modalities are fundamental components of the STREL logic, 
	showing their importance in spatio-temporal analysis. These operators allow the verification of connectivity and propagation of an agent network at a  fixed time. 
	For any propositions $\fvar_1$ and $\fvar_2$, Reach modality, denoted as $\fvar_1 \mathcal{R}_s \fvar_2$, asserts the existence of a communication chain connecting a source and a destination with constrains on the chain's agents and total weight. 
	Formally, the property holds if there exists a communication chain $ag_1 \xrightarrow{d_1} ag_2 \cdots \xrightarrow{d_{n-1}} ag_n$ where $\fvar_1$ holds for all intermediate agents in the chain $ag_{\leq n-1}$ and $\fvar_2$ is true at the final agent $ag_n$ and such that $(\sum_{i \in [0,n-1]}d_i) \leq s $. 
	In \mutgl, it can be written as:
	\[
	\fvar_1 \mathcal{R}_s \fvar_2 = \exists a. \:a \leq \spaceHorizon{s}\:\mu X. (\fvar_2 \lor (\fvar_1 \land \diam{\leq s\;} X) )
	\]
	Observe that we introduce the modality $\phi\leq\phi'$, which is already defined in standard \mutgl. It compares the values of each agents at $\phi$ and $\phi'$ using the pointwise order. We use it the same way in our extension:
	\[ \semExt{\fvar\leq\fvar'}{\sigma,\rho}{t,h,s} = \bigland_{a \in A}\big( \semExt{\fvar}{\sigma,\rho}{t,h,s,a} \to \semExt{\fvar'}{\sigma,\rho}{t,h,s,a} \big)
	\]

	Escape property is written $ \mathcal{E}\fvar$ and 
	checks if the property $\fvar$ can be propagated enough 
	in space. Formally, it means that there exists a 
	destination agent $ag$ where a communication chain 
	in which $\fvar$ holds can be created. 
	In \mutgl:
	\[
	\mathcal{E}_s\fvar =  \exists b.\; a \leq  (\mu X.( \fvar \land (b \lor \diam{\leq s\;} X) ) \land \neg (\spaceHorizon{s\;}\mu X.(b \lor \diam{\leq s\;} X ))) 
	\]

	However, those standard formulations do not take into account the temporal dimension of message transmission. The extension proposed offers this possibility leading to these new formalization of reachability and escape:
	\[
	\fvar_1 \mathcal{R}_s \fvar_2 = a \leq \spaceHorizon{m}\;H_t\;\mu X. (\varphi_2 \lor (\varphi_1 \land \F{[0,t]}\diam{\leq m\;} X) )
	\]
	\[
	\mathcal{E}_s\fvar = \exists b.\; a \leq  (H_t \;\mu X.( \fvar \land (b \lor \F{[0,t]}\diam{} X) ) \land \neg (H_t \spaceHorizon{m\;}\mu X.(b \lor \F{[0,t]}\diam{\leq m\;} X ))) 
	\]

	\end{example}
	
	\begin{corollary}
		Extended $\mu$TGL is strictly more expressive than $\mu$-TGL.
	\end{corollary}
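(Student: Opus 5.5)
Strict expressiveness has two halves: (i) every \mutgl formula has an equivalent extended formula, and (ii) some extended formula, here the causal Reach of Example~\ref{ex:causal-reach}, has no equivalent in \mutgl.

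For (i) we use Lemma~\ref{lem:vanilla-infinity}. A \mutgl formula is in particular an agent formula without $\spaceHorizon{}$. Evaluated at $s=\infty$, its extended semantics coincides with the vanilla semantics. To obtain a formula that is equivalent at every space horizon, we can reason directly on the syntax: when no $\spaceHorizon{}$ occurs and every $\diam{D}$ has $D$ bounded, the budget only affects the $\undet$ case. So we simply compare the two logics on their common evaluation domain, the vanilla one, i.e.\ at $s=\infty$. Then every vanilla formula is trivially expressible.

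For (ii) we fix $\psi = \spaceHorizon{4}\,H_2\,\chain{a}$ and the timed graph $\sigma$ of the running example. Example~\ref{ex:last-horizon} shows that $\psi$ holds for $e$ at time $0$, while $\spaceHorizon{3}\,H_2\,\chain{a}$ is false. We then build a second timed graph $\sigma'$ with a causal chain from $e$ to $a$ of the same hop and time structure, but with total weight greater than $4$. The natural choice is to scale all weights by a factor $\lambda>1$, or to lengthen one edge. On $\sigma'$, $\psi$ is false for $e$ at time $0$, computed with the extended semantics.

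The core of (ii) is an invariance lemma for vanilla \mutgl: no vanilla formula distinguishes $(\sigma,e,0)$ from $(\sigma',e,0)$. Scaling alone does not work, because the intervals $D$ in $\diam{D}$ can test weights. The cleaner route is a family argument. Fix a vanilla formula $\fvar$. Let $M$ be the maximal finite endpoint among the intervals $D$ in $\fvar$, and $\epsilon$ the minimal positive gap among those endpoints. Take graphs $\sigma_n$ whose causal chain consists of $n$ edges, each of weight $w$ chosen generically so that membership in each interval of $\fvar$ is the same for every edge. The total cost $nw$ then crosses $4$ as $n$ grows. We prove by induction on $\fvar$ that its vanilla semantics on $\sigma_n$ at $e$ is eventually constant in $n$. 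This uses that vanilla semantics only ever compares individual edge weights with intervals, never sums of weights. The same induction through least fixpoints uses the finite iteration of the Cousot--Cousot theorem.

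The main obstacle is this invariance induction in the fixpoint case, together with the comparison $\fvar\le\fvar'$ and $\exists a$. Fixpoint iterates may need more iterations as $n$ grows, so "eventually constant" has to be proved uniformly. I would try an Ehrenfeucht--Fra\"iss\'e-style bisimulation on the unfolded chain instead. Relate positions at equal distance from $a$, agreeing on which named agents occur. Then show that the vanilla semantics is invariant under the relation when agent constants are restricted to those mentioned in $\fvar$ (plus $a$, $e$). Vanilla semantics cannot count hops beyond the bisimulation, so $\chain{a}$-style fixpoints give the same value on long and short chains. The extended $\psi$ instead separates them by total cost, which concludes strictness.
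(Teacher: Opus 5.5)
Your part (i) matches what the paper does. The corollary has no separate proof there: inclusion is Lemma~\ref{lem:vanilla-infinity} read at $s=\infty$, and strictness is only motivated by the causal Reach of Example~\ref{ex:causal-reach}, with no non-definability argument.

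Your part (ii) tries to supply that missing argument, but the lemma it rests on is false. For a fixed vanilla $\fvar$ and a generic $w$, the value of $\fvar$ at $(e,0)$ on the chains $\sigma_n$ need not be eventually constant in $n$. Take $\fvar=\mu X.\,(a\lor\diam{[\frac14,1]}\diam{[\frac14,1]}X)$ and a static chain $e=x_0,\dots,x_n=a$ whose edges all have weight $w\in(\frac14,1)$. Every walk from $e$ to $a$ has length $\equiv n \pmod 2$, so $\fvar$ holds at $e$ exactly when $n$ is even, and this oscillates forever. Choosing $w$ in the lowest cell does not help: agent variables let you force genuine moves even when the intervals contain $0$, e.g.\ via $\exists b.(\neg b\land\diam{D}(b\land\cdots))$.

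The bisimulation route fails for related reasons. Agent variables, $\exists$ and $\le$ break bisimulation invariance. Bounded nesting $a\lor\diam{D}(a\lor\diam{D}(\cdots))$ counts hops up to its depth, so on a uniform-weight family ``cost $\le 4$'' is literally ``at most $\lfloor 4/w\rfloor$ hops'', which is vanilla-definable.

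Finally, eventual constancy would not give a contradiction on its own, since $\psi$ is itself eventually $\bot$ in $n$. You would also need the stabilisation index to be independent of $w$ within its cell, and then take $w$ below $4$ divided by that index.

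The repair is to swap the roles of $n$ and $w$: keep the hop structure fixed and perturb the weights. You already observe that the vanilla semantics reads $\sigma$ only through the tests $\sigma(t,x,y)\in D$ for the finitely many $D$ occurring in $\fvar$. That gives a trivial invariance lemma: two timed graphs with the same outcomes on these tests give the same semantics to every subformula. The fixpoint case is immediate because the two functionals are literally equal, so no iteration counting is needed.

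Then take a static chain of $5$ edges with weights near $\frac45$. Choose two weight vectors that lie componentwise in the same cells of the partition induced by the endpoints of $\fvar$, but whose sums are $\le 4$ and $>4$ respectively. If $\frac45$ is itself an endpoint, move one weight slightly up and another slightly down.

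An induction on the fixpoint iterates shows that $\chain{a}$ can only be $\top$ at budget $s$ if there is a walk to $a$ of cost at most $s$. So $\psi$ is $\top$ at $(e,0)$ on the first graph. On the second graph it is $\bot$, once $\spaceHorizon{4}$ turns the remaining $\undet$ into $\bot$. Meanwhile $\fvar$ cannot distinguish the two graphs.
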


	\section{Computing the Extended Semantics}

	Computing the semantics of modal logics is always tricky 
	because we may need infinite amount of information 
	to be able to compute a local part of the semantics.
	For temporal logics, this translates in the following:
	to compute the semantics at a given time, we may need 
	the semantics of subformulas for a unbounded 
	amount of time. This can be overcome 
	by ensuring that the intervals in the modalities 
	are bounded. The fixpoint operator of \mutgl 
	brings additional complications: even if 
	the intervals are bounded, the fixpoint may not 
	converge in a finite number of iterations (which is 
	a problem in itself) and we may have to apply the 
	modality arbitrary many times, and so, we 
	may need unbounded amount of information again.
	To overcome this problem, the monitoring algorithm 
	of \mutgl is only dealing with formulas that only 
	need a finite horizon of time, meaning that 
	any fixpoint which contains a $\F{T}$ modality, must 
	be guarded by a time horizon modality $H_h$. When 
	the timed graph is assumed piecewise-constant 
	with finitely many changes of values within a 
	time window bounded by this horizon, we can then 
	compute the semantics, and in particular, the 
	fixpoints terminate after finitely many iterations.
	Concretely, after computing this horizon $h_\phi$, 
	what is computed is a restriction of the 
	semantics, namely
	$\map{\semfun{\fvar}{\sigma,\rho}}{\preals\times[0,h_\phi]\times\agents}{\threeval}$.

	In our extension, we would like something similar, 
	however, we do not want to make additional 
	restrictions on the set of formulas we can 
	monitor: asking that fixpoint are guarded 
	by space-horizon modalities means we could not 
	monitor formulas 
	from vanilla \mutgl, seen as formulas of this 
	extensions.
	It turns out such restrictions are not needed:
	1) either we are under the scope of a space-horizon, 
	in which case we can bound the amount 
	of information we need, similarly to the time horizon 
	2) or we are not, and what we need is the value at 
	infinity, which can be computed similarly to 
	the semantics of vanilla \mutgl.

	\subsection{Bounding the Information}

	To make this statement concrete, we introduce a new 
	semantics, which defines the needed 
	(bounded) information of the semantics by induction:
	\begin{definition}[Concrete Semantics]
	Fix $\fvar$ and define $s_{\phi}$ to be the 
	maximal $s\in\preals$ seen as $\spaceHorizon{s}$ in 
	$\fvar$. We define the concrete semantics as 
	a function of type:
	\[
	\rsemextfunbis{\fvar}{\sigma,\rho}\,\colon\,\preals\times[0,h_\phi]\times\left([0,s_{\fvar}]\cup\{+\infty\}\right)\times\agents\to\threeval
	\]
	by induction on the formula $\fvar$, using the same 
	definition as for $\semextfun{\fvar}{\sigma,\rho}$, 
	except that least fixpoints and context $\rho(X)$ 
	are over functions of type 
	$\preals\times[0,h_\phi]\times\left([0,s_{\fvar}]\cup\{+\infty\}\right)\times\agents\to\threeval$.
	\end{definition}

	\begin{lemma}
		\label{lem:restriction}
		For every context $\rho$ with $\rho(X)\,\colon\,\preals\times\reals\times\ereals\times\agents\to\threeval$, 
		we have:
			\[\rsemextfunbis{\fvar}{\sigma,\overline{\rho}} = \overline{\semextfun{\fvar}{\sigma,\rho}}\]
		where $\overline{f}$ is the restriction to 
		$\preals\times[0,h_\phi]\times\left([0,s_{\fvar}]\cup\{+\infty\}\right)\times\agents\to\threeval$
		of a function or a context.
	\end{lemma}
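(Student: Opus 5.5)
We argue by structural induction on $\fvar$, proving the restriction identity for all contexts at once. The key invariant is that every case of the extended semantics computes the value at a point $(t,h,s,a)$ of the restricted domain $R=\preals\times[0,h_\phi]\times([0,s_\fvar]\cup\{+\infty\})\times\agents$ using only values of subformulas (or of $\rho(X)$) at points that again lie in $R$. Throughout, $h_\phi$ and $s_\phi$ computed for $\fvar$ bound those of its subformulas, so the restriction for $\fvar$ refines the one for any subformula, and the induction hypothesis applies at all relevant points.

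The base cases $\top$, $p$, $a$ and $X$ are immediate; for $X$ this is the definition of $\overline{\rho}$. Boolean connectives and $\exists a$ are pointwise, so they follow directly. For $\F{T}$ and $H_h$, the argument is the one from vanilla \mutgl: the time horizon only decreases inside $\F{T}$, and $H_{h'}$ only queries $h'\le h_\phi$ or $\min(h,h')$. The space component is untouched, so these cases reduce to the existing vanilla restriction result.

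The two space cases carry the new content. For $\diam{D}\fvar$, the recursive call is at $s-\sigma(t,a,b)$, and only when $\sigma(t,a,b)\le s$. If $s$ is finite, then $0\le s-\sigma(t,a,b)\le s\le s_\fvar$. If $s=+\infty$, then $s-\sigma(t,a,b)=+\infty$, since the call requires $\sigma(t,a,b)\in D$ and hence $\sigma(t,a,b)$ is finite. The side conditions compare only $\sigma$, $D$ and $s$, so the case closes. For $\spaceHorizon{s'}\fvar$, the calls are at $s'$ and at $\min(s,s')$. We have $s'\le s_\fvar$ by definition of $s_\fvar$, and $\min(s,s')\in[0,s_\fvar]$ whenever $s\in R$, including $s=+\infty$. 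The condition $s\ge s'$ is evaluated on the argument directly.

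The main obstacle is the fixpoint case $\mu X.\fvar$. We write $\Phi(f)=\semextfun{\fvar}{\sigma,\rho[X\to f]}$ on the full lattice and $\Phi^c(g)=\rsemextfunbis{\fvar}{\sigma,\overline{\rho}[X\to g]}$ on the restricted lattice. The induction hypothesis, applied to $\fvar$ with context $\rho[X\to f]$, gives $\overline{\Phi(f)}=\Phi^c(\overline f)$. Both operators are monotone, the restricted one by the same positivity argument. We would then show $\overline{\lfp\Phi}=\lfp\Phi^c$ by transfinite Kleene iteration, as in Cousot--Cousot. The claim is $\overline{\Phi^\alpha(\bot)}=(\Phi^c)^\alpha(\bot)$ for all ordinals $\alpha$. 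The successor step follows from the commutation just stated. At limit steps, restriction commutes with pointwise suprema, because the order is pointwise and restriction is projection on coordinates. Both chains stabilise at ordinals, and once the larger one has stabilised its restriction is a fixpoint of $\Phi^c$, so the two fixpoints agree.

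The delicate point is that $\overline{\cdot}$ must be stated for contexts whose values on $R$ depend only on values on $R$. This holds because the induction hypothesis is quantified over all contexts $\rho$, and the right-hand side depends on $\rho$ only through $\overline\rho$; that is exactly the invariant established in the case analysis above. This is why we state the induction hypothesis universally in $\rho$ rather than for a fixed context.
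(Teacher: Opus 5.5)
Your proof is correct and follows the same route as the paper's. You use structural induction to check that each case queries only points of the restricted domain: $\diam{D}$ at finite $s$ and at $s=+\infty$, and $\spaceHorizon{s'}$ via $s'\le s_\phi$. You handle $\mu X.\fvar$ by commuting restriction with the functional and with suprema, and your transfinite Kleene iteration states this more carefully than the paper's one-line continuity remark. One caveat, shared with the paper: $\F{T}$ queries horizons $h-t'$ that may be negative and hence lie outside $[0,h_\phi]$, so ``the time horizon only decreases'' does not by itself keep you in the domain. That case therefore rests entirely on the vanilla $\mu$-TGL restriction result you invoke.
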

	The intuition of this lemma is that computing the 
	whole semantics and then restricting to the part we really need 
	is the same as doing the computation with restricting
	all the intermediate results to the part we need.
	This means that one can always bound the amount of 
	information that is needed to compute the needed part 
	of the semantics.

	\subsection{Data Structure}
	
	Still, bounded information does not mean that the 
	computation -- particularly the fixpoints -- will
	terminate. 
	For simplicity, and as it is usually done for monitoring 
	algorithms of temporal logics, we will assume that 
	the timed graph is given by samples at runtime, and that 
	the timed graph is constant between two samples.
	This introduces imprecisions in the analysis, but 
	remains a reasonable assumption if the signal is 
	sufficiently sampled and the dynamics not too wild.
	In this context (and modulo some mild non-Zeno conditions)
	we will prove that computing samples of the 
	semantics terminates, and that we can then perform 
	monitoring.
	Concretely, 
	the monitoring algorithm works with timed graphs 
	represented by a sequence of samples of the form 
	$
	x = ((t_0,\sigma_0),\ldots,(t_n,\sigma_n),\ldots)$,
	where $t_0 < t_1 < \ldots$ is an increasing sequence of 
	$\preals$ and 
	where $\sigma_i$ is a weighted graph whose vertices are agents $\agents$ 
	which denotes the state of the timed graphs at times $[t_i,t_i+1)$. 
	Accordingly, we process the signal using the 
	piecewise constant functions defined by 
	$\sigma(t,a,b) = \sigma_i(a,b)$ if $t \in [t_i,t_{i+1})$.
	We assume that the sequence $(t_n)_{n\in\mathbb{N}}$
	does not converge to a finite value.
	This implies that 1) the sequence diverges to infinity and 
	2) in any given bounded interval of $\preals$, there 
	will be only a finite number of $t_n$.
	This data structure is used both for the input timed 
	graph, but also for the semantics to be computed.
	Indeed, the semantics depends on three real values: 
	the time $t$, the time horizon $h$, and the space 
	horizon $s$. 
	That is, up to currying, the semantics is of type:
	\[\semextfun{\fvar}{\sigma,\rho}\:\colon\:
		\preals\to\reals\to\ereals\to\agents\to\threeval.\]
	Therefore, the result will be a signal of signals of signals of uncertain sets of the form:
	for $i, j, k\in\mathbb{N}$,
	$w = \{ \;(t_i,
	\{\; (h_{i,j},
	\{\;(s_{i,j,k}, \text{Uset}_{i,j,k})\;\})
	\;\} )\; \}$.
	We call such a data structure a \emph{discrete signal}. 
	The last model to represent is the uncertain sets: using the same data structure as \cite{ER22}, the  uncertain sets Uset : $ A \rightarrow 3$ are represented by a pair $[U,V]\;( U,V \in \mathcal{P}(A))$ such that 
	\[
	[U,V](a) = 
	\left\{
	\begin{array}{ll}
	\top &\text{ if } a \in U\\
	\undet &\text{ if } a \notin U \land a \in V\\
	\bot &\text{ if } a \notin V\\
	\end{array}
	\right.
	\]
	Finally, given such a data structure, we obtain 
	a continuous-time function:
	\[
	\bar{w}(t,h,s,a) = \left\{
	\begin{array}{ll}
		\top &\text{ if } a \in U_{i,j,k}\\
		\undet &\text{ if } a \notin U_{i,j,k} \land a \in V_{i,j,k}\\
		\bot &\text{ if } a \notin V_{i,j,k}\\
	\end{array}
	\right.
	\]
	whenever $t_i \leq t < t_{i+1} \text{, } h_{i,j} \leq h-(t-t_i) < h_{i,j+1} \text{ and } s_{i,j,k} \leq s < s_{i,j,k+1}$.

	\subsection{Termination of the Computation}

	What remains to be done is to prove that the 
	computation of the semantics for a bounded 
	amount of time $T$, 
	that is, the function 
	$\rsemextfunbis{\fvar}{\sigma,\rho}$
	restricted to $[0,T]\times[0,h_\fvar]\times\left([0,s_\fvar]\cup\{+\infty\}\right)\times\agents$
	can be computed.
	In particular, this requires to prove that all the 
	fixpoints involved in the computation terminate in a 
	finite number of iterations.
	To prove this, we construct finite sets 
	$\mathcal{T}$,
	$\mathcal{H}$, 
	and $\mathcal{S}$ 
	of reals, that only depend on $\sigma(t',-,-)$
	with $t'\in[0,T+h_\phi]$ and 
	$\fvar$. We would then prove by induction that 
	the computation terminates and that there 
	is a discrete signal
	\begin{equation}
	w =
	\{\; (t,
		\{\; (h,
			\{\;(s, 
				\text{Uset}_{t,h,s}
			)\;\}
		)\;\}  
	)\;\}
	\label{eq:discrete-signal}
	\end{equation}
	such that $\rsemextfunbis{\fvar}{\sigma,\rho}=\overline{w}$,
	$t\in\mathcal{T}$,
	$h\in\mathcal{H}$, and 
	$s\in\mathcal{S}$.
	Now, fix $T$. By assumption, there is a finite 
	number of $0=t_0, \ldots, t_{n}$ in $[0,T+h_\phi]$.
	We build $\mathcal{T}$, 
	$\mathcal{H}$, and $\mathcal{S}$ as 
	follows. They are the smallest sets satisfying:
	\begin{itemize}
		\item 1) $t_i\in\mathcal{T}$, for 
			$i \leq n$ and 2) 
			if $t\in\mathcal{T}$, then $t+a$, 
				$t+b\in\mathcal{T}$, for every 
				$[a,b]$ appearing in $\F{[a,b]}$
		\item 1) $0\in\mathcal{H}$, 
			2) $h_\phi\in\mathcal{H}$, 
			3) $h\in\mathcal{H}$ for every 
				$h$ appearing as $H_h$, and
			4) if $h\in\mathcal{H}$, then $h-a$, 
				$h-b\in\mathcal{H}$, for every 
				$[a,b]$ appearing in $\F{[a,b]}$
		\item 1) $0\in\mathcal{S}$,
			2) $s\in\mathcal{S}$ for every 
				$s$ appearing as $\spaceHorizon{s}$
			3) $b\in\mathcal{S}$ for every 
				$b$ appearing as $\Diamond_{[a,b]}$, and
			4) for $i\leq n$, $a,b\in\agents$, and 
				$s\in\mathcal{S}$, then
				$s-\sigma(t_i,a,b)\in\mathcal{S}$.
	\end{itemize}
	intersected by $[0,T+h_\fvar]$, $[0,h_\fvar]$, and $[0,s_\fvar]$
	respectively. Finally, we add $+\infty$ to $\mathcal{S}$.
	Those sets are finite as basically 
	finite unions of finitely generated sub-monoids 
	of $(\mathbb{R},+)$ intersected with bounded intervals.
	
	\begin{lemma}
		\label{lem:termination}
	For every subformula $\psi$ of $\phi$ and 
	every context $\rho$ where every $\rho(X)$
	is obtained as $\overline{w}$ for some $w$
	as in \eqref{eq:discrete-signal}, 
	then $\rsemextfunbis{\psi}{\sigma,\rho}$
	is also such a function.
	Furthermore, if $\psi\equiv\mu X. \psi'$, then 
	the fixpoint defining its semantics terminates.
	\end{lemma}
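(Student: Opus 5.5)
We argue by structural induction on the subformula $\psi$, proving the two claims simultaneously. Call a function \emph{$(\mathcal{T},\mathcal{H},\mathcal{S})$-representable} if it equals $\overline{w}$ for a discrete signal $w$ as in \eqref{eq:discrete-signal} whose breakpoints lie in $\mathcal{T}$, $\mathcal{H}$ and $\mathcal{S}$. Equivalently, the function is constant on every cell of the finite partition of $[0,T]\times[0,h_\fvar]\times([0,s_\fvar]\cup\{+\infty\})$ induced by consecutive elements of the three sets, with $+\infty$ forming its own cell. The key point is that, because $\mathcal{T}$, $\mathcal{H}$, $\mathcal{S}$ are finite, there are only finitely many representable functions: one element of $\threeval^{\agents}$ per cell and finitely many cells. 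By Lemma~\ref{lem:restriction}, we may work with $\rsemextfunbis{\cdot}{\sigma,\rho}$ throughout, and all contexts stay restricted to this finite domain.

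The base cases are as follows.
\begin{itemize}
\item For $\top$, $p$, and agent atoms $a$, the value is constant in $h$ and $s$. Predicates may depend on $\sigma$, which changes only at the $t_i\in\mathcal{T}$.
\item For a variable $X$, the claim is the hypothesis on $\rho$.
\end{itemize}

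The Boolean connectives and $\exists a$ act cellwise. To combine two representable functions we use the common refinement of their partitions, which still has breakpoints in the same sets. The inductive cases use the closure properties built into the three sets.
\begin{itemize}
\item \emph{$\F{[a,b]}$.} The value at $(t,h,s)$ is a disjunction over $t'\in[a,b]$ of values at $(t+t',h-t',s)$. Since $\mathcal{T}$ is closed under $+a,+b$ and $\mathcal{H}$ under $-a,-b$, the endpoints of the cells crossed by this diagonal segment move only at points of $\mathcal{T}$ and $\mathcal{H}$. So the disjunction is constant on cells, exactly as in the vanilla \mutgl argument of \cite{ER22}.
\item \emph{$H_h$.} We use $h\in\mathcal{H}$, together with the fact that $\min(h,h')$ only introduces the breakpoint $h'$.
\item \emph{$\spaceHorizon{s'}$.} We use $s'\in\mathcal{S}$, and $\min(s,s')$ only introduces the breakpoint $s'$.
\item \emph{$\diam{[a,b]}$.} For $t$ in a cell, $\sigma(t,a,b)=\sigma(t_i,a,b)$ is fixed. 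The conditions $\sigma(t,a,b)\le s$ and $\max D>s$ only change at $s\in\mathcal{S}$, because $\sigma(t_i,a,b)=0+\sigma(t_i,a,b)$ and $b\in\mathcal{S}$. The shifted value at $s-\sigma(t_i,a,b)$ is constant on cells because $\mathcal{S}$ is closed under subtracting edge weights. The point $s=+\infty$ maps to itself.
\end{itemize}
The main subtlety in these cases is the truncation by $[0,s_\fvar]$ and $[0,h_\fvar]$. When $s-\sigma(t_i,a,b)<0$, the case $\sigma(t,a,b)>s$ applies, so the value is never queried. We must also check that intersecting with the bounded interval does not destroy closure at points where the value is actually queried.

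For $\psi\equiv\mu X.\psi'$, consider the Kleene iteration $f_0=\bot$, $f_{k+1}=\rsemextfunbis{\psi'}{\sigma,\rho[X\mapsto f_k]}$. By the induction hypothesis applied to $\psi'$ with the representable context $\rho[X\mapsto f_k]$, each $f_k$ is representable. The functional is monotone in the truth order (positivity), so the $f_k$ form an increasing chain. This chain lives in the finite set of representable functions, so it stabilises after at most $|\agents|\cdot 2\cdot(\#\text{cells})$ steps. The stable value is a fixpoint. Because it is reached from $\bot$ by a monotone map, it lies below every fixpoint, and is therefore the least fixpoint, which is representable.

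The main obstacle, and the point where the naive computation in Example~\ref{ex:last-horizon} diverges, is that without the finiteness of $\mathcal{S}$ the iteration could create ever finer breakpoints in $s$ at $s-n\cdot w$. I expect the proof to hinge on carefully verifying that every $s$-value queried during the iteration stays in the finite, subtraction-closed set $\mathcal{S}\cap[0,s_\fvar]$, with $+\infty$ handled as a separate isolated cell. That isolation is exactly what avoids the discontinuity at infinity.
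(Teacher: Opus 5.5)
\emph{The $\F{T}$ case.} Your induction scheme and your handling of $\mu$ follow the paper: a monotone Kleene chain in a finite set of discrete signals, whose stable value is the least fixpoint, which spells out what the paper calls obvious. The problem is that you replace ``equals $\overline{w}$ for a discrete signal'' by ``constant on the boxes of the product grid built from $\mathcal{T}$, $\mathcal{H}$, $\mathcal{S}$'' and call the two equivalent. They are not. In the paper's $\overline{w}$, the breakpoints $h_{i,j}$ of the $i$-th sample interval are compared with the shifted horizon $h-(t-t_i)$, so horizon thresholds may move with $t$ inside $[t_i,t_{i+1})$. Some such slant is unavoidable: $\F{T}$ translates along $(t,h)\mapsto(t+t',h-t')$, which preserves the deadline $t+h$, and product boxes are not closed under $\F{T}$. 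For a counterexample, take an agent $x$ whose distance to $a$ lies in $D$ exactly from a sample time $t_1$ on. At $s=+\infty$, $\diam{D}a$ at $x$ is $\bot$ before $t_1$, $\top$ from $t_1$ on, and $\undet$ whenever $h<0$. Then for $t_1-\beta\le t<t_1$ and $0\le h<\beta$, $\F{[0,\beta]}\diam{D}a$ at $x$ is $\top$ iff $t+h\ge t_1$, and $\undet$ otherwise. This anti-diagonal threshold cannot be represented by any finite box grid; for $t<t_1-\beta$ the threshold is instead the horizontal line $h=\beta$, so both kinds occur. Your justification misses this because the set of boxes met by the segment $\{(t+t',h-t') : t'\in[\alpha,\beta]\}$ changes not only when its endpoints cross grid lines. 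It also changes whenever the segment passes through a grid vertex $(t_k,h_l)$, i.e.\ along the lines $t+h=t_k+h_l$. So closure of your class under $\F{T}$ is not established, and neither is representability of the Kleene iterates. The representation must allow thresholds on $t+h$ inside a sample interval, alongside horizontal ones. That is what the relative horizon in the paper's data structure is meant to provide, and your box reformulation discards it.

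\emph{Orientation of the closure rules.} In the $\Diamond$ case, for $s\mapsto g(s-\sigma)$ to be constant on $\mathcal{S}$-cells you need $s_k+\sigma\in\mathcal{S}$ for every breakpoint $s_k$ of $g$. This includes $s_k=0$, which is where the condition $\sigma\le s$ comes from. Closure under \emph{subtracting} weights follows the direction of the queries, not of the breakpoints. In particular, rule 4 applied to $0$ gives $-\sigma$, which is discarded; it never produces $0+\sigma$. Concretely, take $\spaceHorizon{5}\,\diam{[0,3]}\diam{[0,3]}a$ on three agents at mutual distance $2$. The rules yield $\mathcal{S}=\{0,1,3,5,+\infty\}$, yet $\diam{[0,3]}a$ at an agent other than $a$ switches from $\undet$ to $\top$ at $s=2$. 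The same reversal affects your $\F{[\alpha,\beta]}$ bookkeeping: the window $[t+\alpha,t+\beta]$ meets a breakpoint $t_k$ at $t=t_k-\alpha$ and $t=t_k-\beta$, and the horizon window meets $h_l$ at $h=h_l+\alpha$ and $h=h_l+\beta$. The paper states its rules for $\mathcal{T}$, $\mathcal{H}$, $\mathcal{S}$ with the same orientation as you, so on this point you are faithful to it. The repair is to the construction itself: close $\mathcal{S}$ under $s\mapsto s+\sigma(t_i,a,b)$, and close $\mathcal{T}$ and $\mathcal{H}$ under the opposite translations. All three sets stay finite after intersecting with the bounded intervals. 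Once the representable class is genuinely closed under every operator, the rest of your argument goes through, including $+\infty$ as an isolated cell and the finite Kleene chain.
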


	\subsection{Computing semantics}
	
	As in \cite{ER22}, in order to manipulate the data structure, we need to implement a way to zip two signals together, that is construct a procedure $Zip$ that takes two signals $x$ and $y$, and produces a composite signal $x\times y $ satisfying: 
	$\overline{x\times y}(t) = (\bar{x}(t),\bar{y}(t)).$
	In particular, the union of two signals requires zipping corresponding each layer of signals of the semantic, and then merging the associated uncertain sets: $[U^x_{i,j,k},V^x_{i,j,k}],[U^y_{i,j,k},V^y_{i,j,k}]$ to get $[U^x_{i,j,k} \cup U^y_{i,j,k},V^x_{i,j,k} \cup V^y_{i,j,k}]$.
	For most operators, the semantic computing is not 
	too different from the constructions in \cite{ER22}. 
	One operator interesting to explain is the diamond. 
	The semantics of the diamond operator $\diam{D}\fvar$ is computed 
	according to Algorithm~\ref{alg:diam}. 
	For each $t_i$ and $h_{i,j}$ constituting the semantic $w_\fvar$ of \fvar, 
	we construct subsets $w_{a,b}$ of the final semantics, parameterized by each 
	weight $d = \tgraph(t_i,a,b)$.
		If $d \notin D$, the value $w_{a,b}(t_i,h_{i,j},0,b)$ will be $\undet$ (l.6, left). 
		It remains unchanged until the space horizon reaches $\min(d,\max(D))$, at which point 
		the values for agent $b$ is confirmed to be $\bot$ (l.6, right).
		If $d \in D$, we initialize $w_{a,b}(t_i,h_{i,j},0,b)$ at $\undet$ (l.8). 
		The semantics at space horizon $s$ depends on the value of $w_\fvar(t_i,t_{i,j},s-d)$. 
		By leveraging the monotonicity of space horizon, it suffices to find the first space horizon value in which agent $b$ becomes 
		determinized to compute its semantics (ll.9--14). 
		If the agent remains indeterminate, we only add the $\undet$ value at space horizon $+\infty$ (l.16). 
	The result is obtained by zipping and unifying (l.17).
	
	\algtext*{EndIf}
	\algtext*{EndFor}
	\begin{algorithm}[tb]
		\caption{Diamond semantic}\label{alg:diam}
		\begin{algorithmic}[1]
			\Require a formula $\diam{[d_1,d_2]}\fvar$, a signal $\sigma$
			\Ensure semantic $\sem{\diam{[d_1,d_2]}\fvar}{\rho}{\sigma,t,h,s}$ for all $t,h,s$
			\State $res$ $\gets$ empty semantic; $sem_\fvar \gets$ semantic of $\fvar$
			\ForAll {t,h in the data structure of $sem_\fvar$}
			\ForAll {edge $a \xrightarrow{d} b$}
			\State $\text{tmp} \gets \bot$
			\If {$d \not \in [d_1,d_2]$}
			\State $\text{tmp}(t,h,0) \gets [\emptyset,\{b\}]$; $\text{tmp}(t,h,\min(d,d_2)) \gets [\emptyset,\emptyset]$
			\EndIf
			\If {$d \in [d_1,d_2]$}
			\State $\text{tmp }(t,h,0) \gets [\emptyset,\{b\}]$
			\If {$\exists s, a \in sem_\fvar(t,h,s)[U]$}
			\State $s_{min} \gets \min \;\{s, a \in sem_\fvar(t,h,s)[U] \}$
			\State $\text{tmp} (t,h,s_{min}+d) \gets [\{b\},\{b\}]$; $\text{tmp} (t,h,+\infty) \gets [\{b\},\{b\}]$
			\ElsIf {$\exists s, a \notin sem_\fvar(t,h,s)[V]$}
			\State $s_{min} \gets \min \;\{s, a \notin sem_\fvar(t,h,s)[V] \}$
			\State $\text{tmp}(t,h,s_{min}+d) \gets [\emptyset,\emptyset]$; $\text{tmp} (t,h,+\infty) \gets [\emptyset,\emptyset]$
			\Else \State $(\text{tmp} (t,h,+\infty) \gets [\emptyset,\{b\}]$
			\EndIf
			\EndIf
			\State $res(t,h) = Union(Zip(res(t,h), \text{tmp}))$
			\EndFor
			\EndFor
		\end{algorithmic}
	\end{algorithm}

	\section{Case Study}

	To illustrate the feasibility of our monitoring 
	algorithm, we analyze specifications related to 
	bounding the causal communication graph of a 
	collection of agents collaborating to visit 
	specified positions in a 2D-area. To dynamically 
	allocate which agent should visit which position,
	they execute a Consensus-Based Bundle Algorithm 
	(CBBA) as in \cite{CH09}. Roughly, every 
	connected component of the communication graph 
	(or bundle) run a bidding protocol where 
	each agent bids on specific positions, 
	depending on how difficult or far it is for 
	them to reach the position. After some rounds 
	of communications, the agent with the best bid on 
	a particular position can start going towards this 
	position to visit it.
	Crucially, the number of rounds of communications 
	needed for a bundle to reach a consensus depends 
	on the diameter of the communication graph.
	This is then a suitable problem for monitoring 
	that the diameter of the causal communication 
	graph be bounded to ensure the protocol runs 
	efficiently.

	\subsection{Simulation and Monitoring Framework}

	To obtain traces of simulation on which to 
	run our monitoring algorithm, we will rely 
	on the simulator~\cite{JI26}, based on~\cite{JI24}, and 
	implementing the CBBA protocol of~\cite{CH09} in Python.
	In this simulation framework, we can 
	choose the number of agents, the number of 
	positions to visit, the size of the area, 
	and the communication radius, and the various 
	positions are then drawn randomly. It is then 
	easy to extract the position of each agent 
	during the simulation, compiled in a CSV file.
	The code of our monitoring algorithm, together with 
	some experiment setups to interact with the simulator
	is available on github~\cite{BDM26}.
	Our code is in Ocaml 5.3.0, compiled with Dune, 
	and ran on a 
	Apple M4 Pro with 24GB RAM.

	\subsection{Monitoring the Diameter of the Causal Communication Graph}

	We ran the simulator with the following parameters:
	10 agents, visiting 100 positions, in a $1400\times 1000$ 
	area, with a $500$ communication radius (the specific unit does not matter).
	This produced a trace with $\sim\!12$K time steps.
\begin{figure}[t]
    \centering
    \includegraphics[width=0.40\textwidth]{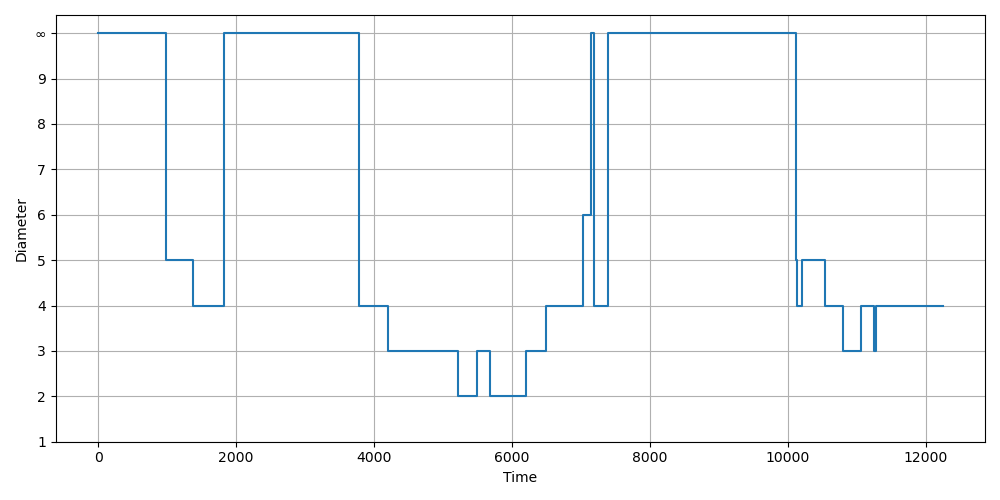}
    \qquad\qquad
    \includegraphics[width=0.40\textwidth]{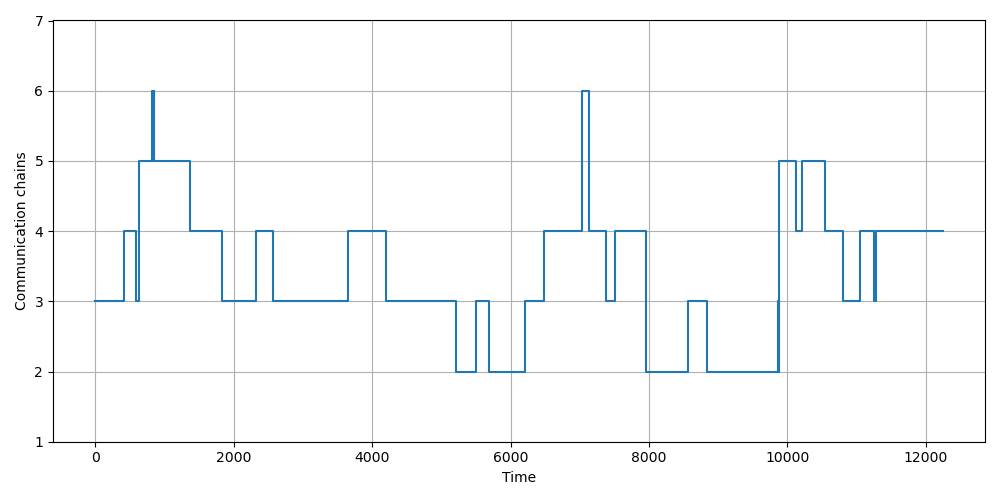}
    \caption{Bounds on the diameter (left) and the communication 
	chains (right) of the causal communication graph over time.}
    \label{fig:exp1}
\end{figure}
	On this specific trace, we monitored a specification
	describing that the causal communication graph within 
	10 time units has a diameter bounded by a 
	constant $d$, namely:
	\begin{equation}
		\label{spec:com-graph}
	\forall a.\, \top \leq 
	\spaceHorizon{d}\, H_{10}\,
	\mu X.\, \left(
		a \lor \F{\leq 10}\,\diam{\leq 1}\,X
	\right),
	\end{equation}
	very similar to the causal reach specification 
	of Example~\ref{ex:causal-reach}.
	We ran the monitoring algorithm  
	for all possible values of $d=\{1,...,9\}$, 
	for an average of $267$s per run (or $0.022$s per 
	time step), without significant difference in time 
	between runs.
	This allows to draw the Figure~\ref{fig:exp1} (left), showing 
	the connectivity of the causal communication graph.
	For example, this shows that it is fully connected 
	only on $[988,1830]\cup[3786,7137]\cup[7186,7385]\cup[10114,12243]$, 
	in which case the diameter is at most $6$.
	The diameter goes down to $2$ on $[5210,5498]\cup[5686,6201]$, 
	but never down to $1$, meaning it is never a clique.
	Another interesting property to monitor is to bound 
	the chains of communications of agents that have at 
	least one chain of communication to an agent $a$, 
	namely
	\begin{equation}
		\label{spec:com-graph-2}
	\forall a.\, 
	H_{10}\,
	\mu X.\, \left(
		a \lor \F{\leq 10}\,\diam{\leq 1}\,X
	\right) 
	\leq 
	\spaceHorizon{d}\, H_{10}\,
	\mu X.\, \left(
		a \lor \F{\leq 10}\,\diam{\leq 1}\,X
	\right),
	\end{equation}
	This specification is slightly stronger than the 
	fact that each bundle is bounded. The Figure~\ref{fig:exp1} (right)
	gives more information as for when the graph is not 
	fully connected: we recover some information on bounds 
	of communication chains.
	Observe that, for simplicity, we monitored one 
	specification for each value of the bound, and in both 
	cases of \eqref{spec:com-graph} and 
	\eqref{spec:com-graph-2}, while all the necessary information 
	is already available in the computation of the 
	semantics of the formulas
	$\spaceHorizon{9}\, H_{10}\,
	\mu X.\, \left(
		a \lor \F{\leq 10}\,\diam{\leq 1}\,X
	\right)$, for every agent $a$.

	\subsection{Scalabity w.r.t. the Number of Agents}

	We ran our monitoring algorithm on traces where the 
	parameters are: the number of agents $n$ is in $[3,5,10,20,30,40,50]$, 
	the number of positions is $m = 3n$, the area is $500\times 500$, 
	and the communication radius is $100$.
\begin{figure}[t]
    \centering
    \includegraphics[width=0.45\textwidth]{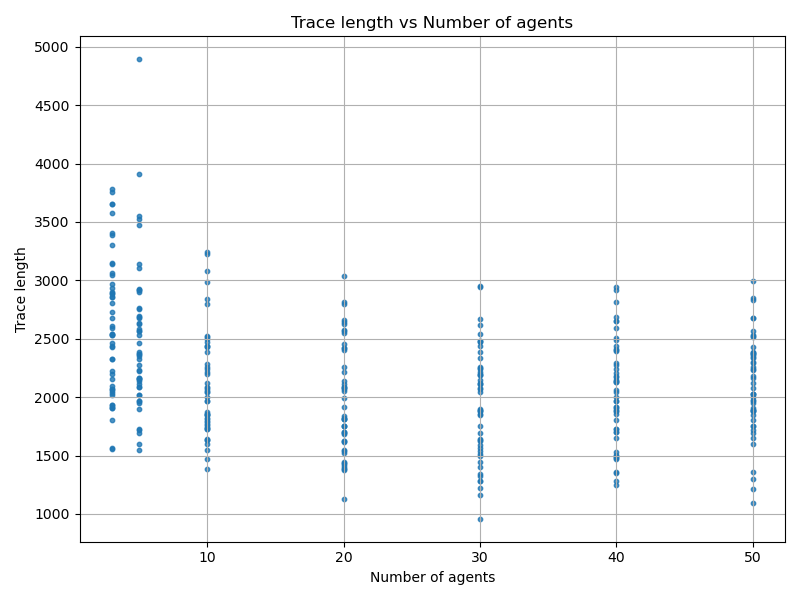}
    \hfill
    \includegraphics[width=0.45\textwidth]{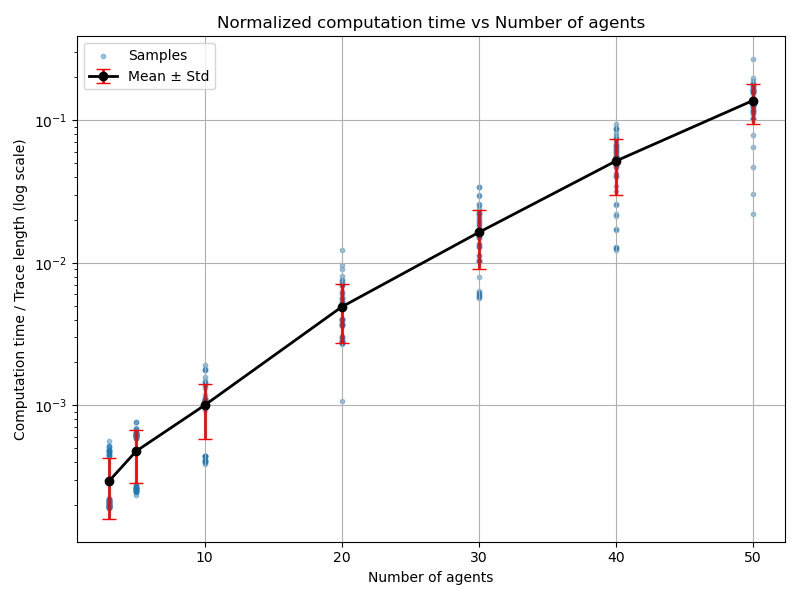}
    \caption{Length of traces and statistics on normalized times of computation (actual time divided by the trace length).}
    \label{fig:exp2}
\end{figure}
	This set up allowed us to have traces of similar lengths,
	even with different number of agents, so that 
	we can focus on the scalability w.r.t. this number.
	We monitored the specification:
	\begin{equation}
		\label{spec:com-tree}
	\top \leq 
	\spaceHorizon{n/2}\, H_{10}\,
	\mu X.\, \left(
		a \lor \F{\leq 10}\,\diam{\leq 1}\,X
	\right),
	\end{equation}
	that is, we only check that the causal communication
	chains towards one specific agent $a$ has a length 
	bounded by $n/2$.
	To recover the specification \eqref{spec:com-graph}, 
	we would have to check this for every agent,
	adding roughly a linear factor in the number of 
	agents.
	This specification is relevant as it would be enough 
	in a decentralized monitoring algorithm, 
	and is more realistic in the sense that this 
	would not require a central agent to know the 
	whole situation of the communication, while each agent can 
	monitor the specification \eqref{spec:com-tree} based only 
	on the communication it receives.
	We ran $50$ times for each number of agents, and 
	compile the results in Figure~\ref{fig:exp2}.
	In this setup, the traces are $\sim\!2K$ time step long.
	The monitoring scales nicely up to 30 
	agents: it takes an average of $17$ms per time step. 
	For 50 agents, the monitoring of 
	the specification \eqref{spec:com-tree} already takes 
	$142$ms per time step in average. This is fine 
	for physical systems with low velocity, 
	for which we would not need to monitor the 
	specification at high frequency, while 
	the additional factor necessary to monitor 
	the specification \eqref{spec:com-graph} would make it 
	impractical. 

	\section{Conclusion}

	We have provided an offline monitoring 
	algorithm with allows to monitor information about 
	the diameter of causal communication graphs, which 
	was not possible with previous spatio-temporal 
	logics. The next step is to make this algorithm 
	online, which will be allowed by the careful 
	semantics crafted in the present paper. 
	To do so, we would like however to integrate it 
	in a more general framework, and particularly in 
	a control loop, as a shielding. To go further in 
	that direction, we would like to develop a 
	quantitative semantics of \mutgl, allowing 
	to better discriminate control prediction and help 
	the control synthesis.
	
	\ackname

	This work is partially supported by the Agence
	de l'Innovation de Défense (AID) via the Centre Interdisciplinaire Mers et Océan (CIMO) 
	project 2025 CHRoMM.
	J.D. is partially funded by the Academic and Research Chaire ``Architecture des Systèmes Complexes'' 
	Dassault Aviation, Naval Group, Dassault Systèmes, KNDS France, Agence de l'Innovation de Défense, 
	Institut Polytechnique de Paris.

	\newpage

	\bibliographystyle{splncs04}
	\bibliography{ext_mu}

	\newpage
	\appendix

	\section{Full Semantics of Our Extension}

	\begin{align*}
		\semExt{\top}{\rho}{\sigma, t, h, s}&= \top\\
		\sem{p}{\rho}{\sigma, t, h, s}&=
		\left\{ 
		\begin{array}{ll} 
			\biglor_{b\in A}\big( p(\sigma(a,t)) &\text{ if } h \geq 0\\
			\undet &\text{ otherwise }
		\end{array}
		\right. \\
		\semExt{\neg \fvar}{\rho}{\sigma, t, h, s}&= \neg\sem{\fvar}{\rho}{\sigma, t, h, s}\\
		\semExt{\fvar \land \psi}{\rho}{\sigma, t, h, s}&=\sem{\fvar}{\rho}{\sigma, t, h, s} \land \sem{\psi}{\rho}{\sigma, t, h, s} \\
		\semExt{a}{\rho}{\sigma, t, h, s}&= \{\rho(a)\}\\
		\semExt{\diam{D}\: \fvar}{\rho}{\sigma, t, h, s}(a,b) &=  \left\{ 
		\begin{array}{ll} 
			\top \; &\text{ if } d(\sigma(a,t),\sigma(b,t)) \in D\\ &\land \; d(\sigma(a,t),\sigma(b,t)) \leq s \\ &\land \; \semExt{\fvar}{\rho}{\sigma, t, h, s-d(\sigma(a,t),\sigma(b,t))}(b) = \top\\ &\\
			\undet \; &\text{ if } d(\sigma(a,t),\sigma(b,t)) > s \land \max(D) > p\\
			&\lor p \geq d(\sigma(a,t),\sigma(b,t)) \land \semExt{\fvar}{\rho}{\sigma, t, h, s-d(\sigma(a,t),\sigma(b,t))}(b) = \undet\\ &\\
			\bot &\text{ if } p \geq \max(D) \land d(\sigma(a,t),\sigma(b,t)) > \max(D) \\
			&\lor \big(p \geq \min(D) \land \max \geq d(\sigma(a,t),\sigma(b,t)) \\ &\;\;\;\;\land \semExt{\fvar}{\rho}{\sigma, t, h, s-d(\sigma(a,t),\sigma(b,t))}(b) = \top \big)
		\end{array}
		\right.\\
		\semExt{\diam{D}\: \fvar}{\rho}{\sigma, t, h, s}(a)&= \biglor_{b \in A} \semExt{\diam{D}\: \fvar}{\rho}{\sigma, t, h, s}(a,b) \\
		\semExt{\F{T}\: \fvar}{\rho}{\sigma, t, h, s}(a)&= \biglor_{t' \in T}\sem{\fvar}{\rho}{\sigma, t+t', h-t', s}\\
		\semExt{\exists a.\:\fvar}{\rho}{\sigma, t, h, s}&= \biglor_{b\in A} \sem{\fvar}{\rho [a\rightarrow b]}{\sigma,t,h,s} \\
		\semExt{H_{h'}\:\fvar}{\rho}{\sigma, t, h, s}&= \left\{ 
		\begin{array}{ll} 
			\bot &\text{ if } \sem{\fvar}{\rho}{\sigma,t,h',s} = \undet \text{ and } h \geq h'\\
			\sem{\fvar}{\rho}{\sigma, t, \min(h,h'), s} &\text{ otherwise }
		\end{array}
		\right. \\
		\semExt{\spaceHorizon{s'}\:\fvar}{\rho}{\sigma, t, h, s}&=  \semExt{\fvar}{\rho}{\sigma, t, h, \min(s,s')} \\
		\semExt{X}{\rho}{\sigma, t, h, s}&= \rho(X)(t,h,s)\\
		\semExt{\mu X.\: \fvar}{\rho}{\sigma, t, h, s}&= \left\{
		\begin{array}{ll}
			\lfp(f \mapsto \semExt{\fvar}{\rho [X\rightarrow f]}{\sigma,-,-,-})(t,h,s) &\text{ if } s<\infty\\
			\lfp(f \mapsto \sem{\fvar}{\rho [X\rightarrow f]}{\sigma,-,-})(t,h) &\text{ if } s = \infty
		\end{array}
		\right.\\
		\semExt{\fvar \leq \psi}{\rho}{\sigma, t, h, s}&=\bigland_{a\in A}(\sem{\fvar}{\rho}{\sigma, t, h, s}(a)\rightarrow \sem{\psi}{\rho}{\sigma, t, h, s}(a)) \\
		\semExt{\Reach{T}{D}{\fvar_1}{\fvar_2}}{\rho}{\sigma, t, h, s} &= \lfp(f \mapsto \semExt{\fvar_2 \lor (\fvar_1 \land \F{T}\diam{D} X)}{\rho [X\rightarrow f]}{\sigma,-,-,-})(t,h,s)\\
		\semExt{\Escape{T}{D}{\fvar_1}{\fvar_2}}{\rho}{\sigma, t, h, s} &= \lfp(f \mapsto \semExt{...}{\rho [X\rightarrow f]}{\sigma,-,-,-})(t,h,s)
	\end{align*}

	\section{Omitted Proofs}

	\begin{proof}[of Lemma~\ref{lem:monotony}]
		
		We prove the stated lemma by induction on \fvar:\\
		
		Base cases:
		\begin{itemize}
			\item if $\fvar = X$ is a fixpoint variable, 
			the result holds because the context is monotone,
			\item for other base cases, the semantics function
			is constant in the space horizon.
		\end{itemize}
		
		Induction step: Let $\fvar$ be a formula and assume the lemma is true for $\fvar$. 
		This means that for all $\rho, \sigma, t,h,s_1,s_2,a$ with $s_1 \leq s_2$, we have
			\[ \semExt{\fvar}{\sigma,\rho}{t,h,s_1,a} \preceq \semExt{\fvar}{\rho}{\sigma, t,h,s_2}
			\]
			
			\begin{itemize}
				\item $\land, \lor, \neg,\exists, H, \spaceHorizon{}$ preserves $\preceq$-monotonicity trivially\\
				\item $\F{T}\fvar$: case disjunction on the value of $\F{T}\fvar$: \vspace{0.5em}
				\begin{itemize}
					\item if $\semExt{\F{T}\fvar}{\sigma,\rho}{t,h,s_1,a} = \undet$: trivial
					
					\item $\semExt{\F{T}\fvar}{\sigma,\rho}{t,h,s_1,a} = \top \\\implies \exists t'\in T \: \semExt{\fvar}{\sigma,\rho}{t+t',h-t',s_1,a} = \top$.\\ By induction hypothesis,  $\semExt{\fvar}{\sigma,\rho}{t+t',h-t',s_2,a} = \top$. Therefore: $\semExt{\F{T}\fvar}{\sigma,\rho}{t,h,s_2,a} = \top$
					
					\item $\semExt{\F{T}\fvar}{\sigma,\rho}{t,h,s_1,a} = \bot\\ \implies \forall t'\in T \: \semExt{\fvar}{\sigma,\rho}{t+t',h-t',s_1,a} = \bot$. \\
					By induction hypothesis,  $\forall t'\in T \: \semExt{\fvar}{\sigma,\rho}{t+t',h-t',s_2,a} = \bot$. Therefore, $\semExt{\F{T}\fvar}{\rho}{\sigma,t,h,s_2}(a) = \bot$\\
				\end{itemize}
				
				\item $\diam{D}\fvar(a)$: case disjunction on the value of $\diam{D}\fvar(a)$ for any $a \in A$:\vspace{.5em}
				\begin{itemize}
					\item if $\semExt{\diam{D}\fvar}{\rho}{\sigma, t,h,s_1}(a) = \undet$, trivial.
					\item if $\semExt{\diam{D}\fvar}{\sigma,\rho}{t,h,s_1,a} = \top$, then 1) $h\geq 0$ and 
					there is $b\in\agents$ such that 
					2) $\sigma(t,a,b)\in D$, 
					3) $\sigma(t,a,b)\leq s_1$ 
					and 4) $\semExt{\fvar}{\sigma,\rho}{t,h,s_1 - \sigma(t,a,b),b} = \top$.
					Since $s_1 \leq s_2$, 3) implies 3') $\sigma(t,a,b)\leq s_2$.
					The induction hypothesis together with 4) implies 
					4') $\semExt{\fvar}{\sigma,\rho}{t,h,s_2 - \sigma(t,a,b),b} = \top$. 
					Altogether, 1+2+3'+4' implies $\semExt{\diam{D}\fvar}{\sigma,\rho}{t,h,s_2,a} = \top$
					
					\item if $\semExt{\diam{D}\fvar}{\sigma,\rho}{t,h,s_1,a} = \bot$, then $h\geq 0$ and
					for all $ b \in A $, one of the following statements holds:
					\begin{align}
						&\sigma(t,a,b) > \max(D) \land s_1 \geq \max(D)\\
						&s_1 \geq \sigma(t,a,b) \land \min(D) > \sigma(t,a,b)\\
						&s_1 \geq \sigma(t,a,b)\land \max(D) \in D
						\land \semExt{\fvar}{\sigma,\rho}{t,h,s_1 - \sigma(t,a,b),b} = \bot
					\end{align}
					(1) and (2) keep the same value by substituting $s_1$ with $s_2$. 
					By induction hypothesis, $\semExt{\fvar}{\sigma,\rho}{t,h,s_2 - \sigma(t,a,b),b} = \bot$, 
					(3) keep the same value as well.  
				\end{itemize}
				\item $\mu X. \fvar$: 
				By Cousot-Cousot theorem, we know that 
				the least fixpoint is obtained by 
				transfinite applications of the functional 
				we are taking the least fixpoint of, 
				starting with the constant function equal to 
				$\bot$. To show that the least fixpoint 
				is monotone in the space horizon, it is 
				enough to show the following properties:
				\begin{itemize}
					\item (base case) the constant function 
					is monotone
					\item (successor case) if $f$ is function 
					monotone in the space horizon, then 
					$\semextfun{\fvar}{\sigma,\rho[X\to f]}$
					is also monotone in the space horizon.
					This is just by induction hypothesis, with 
					the fact that $\rho[X\to f]$ is a monotone 
					context.
					\item (limit case) the supremum of a chain 
					of functions monotone in the space horizon 
					is monotone too. Here, we have to be careful,
					because the orders on $\threeval$ used for the supremum
					($\bot\leq\undet\leq \top$) is not the same 
					as the one used for the monotonicity (information order).
					Let $(f_i)_{i\in I}$ be a chain for the pointwise 
					extension of $\leq$ where each $f_i$ is monotone 
					in the space horizon for the information order, $f$ its supremum and 
					fix $t,h,s_1\leq s_2, a$.
					Since $(f_i)_{i\in I}$ is a chain for the pointwise order, 
					then $(f_i(t,h,s_1,a))_{i\in I}$
					and $(f_i(t,h,s_2,a))_{i\in I}$ are chains for 
					$\leq$. Since $\threeval$ is finite, 
					those chains are ultimately constant, that 
					is there are $i_1$ and $i_2$ such that 
					for all $i\geq i_j$, 
					$f_i(t,h,s_j,a)=f_{i_j}(t,h,s_j,a)$.
					Then, if $i=\max\{i_1,i_2\}$, 
					$f(t,h,s_1,a)=f_i(t,h,s_1,a)\preceq f_i(t,h,s_2,a)=f(t,h,s_2,a)$.
					\qed
				\end{itemize}
			\end{itemize}
	\end{proof}

	\begin{proof}[of Lemma~\ref{lem:vanilla-infinity}]
		We proceed by induction on the formula.
		The base cases are trivial, as the space horizon value has no effect on the value.
		
		For the inductive cases, we assume that the lemma holds for a formula $\fvar$. The non-trivial cases are the following:
		\begin{itemize}
			\item $\diam{D}$: 
			We do a case disjunction on the value of $\diam{[d_1,d_2]} \fvar $ for agent $a$:
			\begin{itemize}
				\item $\semExt{\diam{[d_1,d_2]} \fvar}{\rho}{\sigma,t,h,\infty}(a) = \top \\ \iff \exists b \in A \: d \in [d_1,d_2] \land \semExt{\fvar}{\rho}{\sigma,t,h,\infty}(b) = \top \\ \iff \exists b \in A \: d \in [d_1,d_2] \land \sem{\fvar}{\rho}{\sigma,t,h}(b) = \top \\ \iff \sem{\diam{[d_1,d_2]} \fvar}{\rho}{\sigma,t,h}(a) = \top $ 
				\item $\semExt{\diam{[d_1,d_2]} \fvar}{\rho}{\sigma,t,h,\infty} = \bot \\ \iff \forall b\in A \: d \in [d_1,d_2] \lor \semExt{\fvar}{\rho}{\sigma,t,h,\infty}(b) = \bot \\
				\iff \forall b \in A : d \in [d_1,d_2] \lor \semExt{\fvar}{\rho}{\sigma,t,h}(b) = \bot \\
				\iff \sem{\diam{[d_1,d_2]} \fvar}{\rho}{\sigma,t,h}(a) = \bot
				$
				\item if $\semExt{\diam{[d_1,d_2]} \fvar}{\rho}{\sigma,t,h,\infty}(a) = \undet \\
				\iff \forall b \in A \:  d \in [d_1,d_2] \land \semExt{\fvar}{\rho}{\sigma,t,h,\infty}(b) = \undet
				\\
				\iff \forall b \in A \:  d \in [d_1,d_2] \land \semExt{\fvar}{\rho}{\sigma,t,h}(b) = \undet \\
				\iff \sem{\diam{[d_1,d_2]} \fvar}{\rho}{\sigma,t,h}(a) = \undet
				$
			\end{itemize}
		\end{itemize}

		\begin{itemize}
			\item $\mu X .\fvar$: obvious since we use 
			the pointwise order on functions.
		\end{itemize}
		
	\end{proof}

	\begin{proof}[of Lemma~\ref{lem:restriction}]
		Once again, the proof proceeds by induction on the formula \fvar.
		The base cases are again trivial, as the space horizon parameter does not influence their value.
		For the inductive cases, we assume that the lemma is true for a subformula $\fvar'$. 
		
		For most operators, the inductive step is straightforward:the value at a given space horizon $s$ is independent of the results at larger horizons. We detail less trivial cases below:
			
		\begin{itemize}
			
			\item Space horizon modality $\spaceHorizon{s}$: noticing that $s_{\phi}$ is the maximal value that $s$ can take makes the proof trivial.
			
			\item Diamond modality is easy as well: 
			\begin{itemize}
				\item For any finite space horizon $s \in [0,s_{\phi}]$, the valuation does not depend on any $s'$ greater than $s$.
				\item For the infinite space horizon case $s=+\infty$, the result is determined by $\rsemextfunbis{\fvar}{\sigma,\overline{\rho}}(t,h,+\infty,a)$, which, by induction hypothesis, equals $\overline{\semextfun{\fvar}{\sigma,\rho}}(t,h,+\infty,a)$.
			\end{itemize}
			
			\item The fixpoint construction is preserved under restriction. 
			Indeed, it is sufficient to say that the restriction 
			function is continuous, and, by induction 
			hypothesis, it also commutes with the abstract 
			semantic. The fixpoint is therefore preserved.\qed
			
		\end{itemize}

%
%
%
	\end{proof}

	\begin{proof}[of Lemma~\ref{lem:termination}]
	The statement about the fixpoint is obvious
	because there are only finitely many $w$ as in 
	\eqref{eq:discrete-signal}.

	For the first statement, we reason by induction 
	on $\psi$, and let's do only the interesting cases.

	Assume $\psi\equiv\spaceHorizon{s'}\:\psi'$.
	By induction hypothesis, 
	$\rsemextfunbis{\psi'}{\sigma,\rho}=\overline{w'}$
	for some 
	\[w'=\{\; (t,
		\{\; (h,
			\{\;(s, 
				\text{Uset}_{t,h,s}'
			)\;\}
		)\;\}  
	)\;\}\] 
	as in \eqref{eq:discrete-signal}.
	We define $w$ as follows. 
	First observe that $s'\in\mathcal{S}$ by 2).
	For $s<s'\in\mathcal{S}$ and every $t,h$, 
	define $\text{Uset}_{t,h,s}=\text{Uset}_{t,h,s}'$.
	Now, for every $t,h$ and every $a\in\agents$, 
	since the semantics is monotone in 
	the space horizon, 
	either there is $s_{t,h,a}\in\mathcal{S}$
	such that, for $s < s_{t,h,a}$, 
	$\text{Uset}_{t,h,s}'(a)=\undet$ and 
	for $s \geq s_{t,h,a}$, 
	$\text{Uset}_{t,h,s}'(a)\neq\undet$,
	or $\text{Uset}_{t,h,s}'(a)=\undet$ for all $s$.
	With this in mind, define for $s\geq s'$
	the uncertain set $\text{Uset}_{t,h,s}$ obtained 
	from $\text{Uset}_{t,h,s}'$ by changing  
	$\text{Uset}_{t,h,s}'(a)$ to $\bot$ if $s_{t,h,a}$
	does not exist, or if it exists and $s < s_{t,h,a}$.

	The case $\phi\equiv H_{h'}\:\psi'$ is similar.

	Assume $\psi\equiv\Diamond{D}\:\psi'$.
	It is enough to construct one discrete signal $w^b$
	as in \eqref{eq:discrete-signal}
	for each agent $b$ to correspond to 
	what we denoted 
	$\semExt{\Diamond{D}\:\psi'}{\sigma,\rho}{t,h,s,a,b}$
	since taking the supremum of such discrete 
	signals is also a discrete signal of the same shape.
	Now each of these $w^b$ can be describe as a 
	supremum (depending on the value $\top$, 
	$\undet$, or $\bot$) of a Boolean combinations 
	of such discrete signals.
	For example, since $\sigma(t,a,b)\in\mathcal{S}$ by 
	1) and 4).
	for every $t\in\mathcal{T}$ such that $\sigma(t,a,b) \geq s_\phi$,
	the condition $\sigma(t,a,b)\in D \land\sigma(t,a,b)\leq s$ can be 
	seen as the discrete signal such that 
	$\text{Uset}_{t,h,s}(a)=\top$ if 
	$\sigma(t,a,b)\geq s$ and $\bot$ otherwise.
	Similarly for $\sigma(t,a,b) > s$ and 
	$\max(D) > s$ (by 3).
	Now, the condition $\rsemextfunbis{\psi'}{\sigma,\rho}(t,h,s-\sigma(t,a,b),b)=\top$
	is obtained from the discrete signal $w'$ by induction 
	hypothesis on $\psi'$, by setting 
	$\text{Uset}_{t,h,s}(a)=\top$ if 
	$\text{Uset}'_{t,h,s-\sigma(t,a,b)}(b)=\top$, and 
	$\bot$ otherwise, observing that $s-\sigma(t,a,b)\in\mathcal{S}$ 
	if $s\in\mathcal{S}$ by 4).
	Similarly for the $\undet$ case.

	The case for $\psi=\F{T}\:\psi'$ is similar.\qed

	\end{proof}

\end{document}